\documentclass[12pt]{article}
\pdfoutput=1
%\documentclass[a4paper,11pt,fleqn]{article}
%fÃ¼r das Handout
%\usepackage{pgfpages}
%\pgfpagesuselayout{2 on 1}[a4paper,border shrink=5mm]
\usepackage[utf8]{inputenc}
\usepackage[ngerman,english]{babel}
\usepackage{amsmath,amsfonts,amssymb}
\usepackage{dsfont}
\usepackage{amsthm}
\usepackage{stmaryrd}
\usepackage{lmodern}% http://ctan.org/pkg/lm
\usepackage{fancybox}
\usepackage{a4wide}
\usepackage{hyperref}
\usepackage{enumerate}
\usepackage{color}
\usepackage{bigints}
\usepackage{relsize}
\usepackage{graphicx}
\usepackage{float}
\usepackage{etoolbox}
\usepackage{xcolor}
\usepackage{caption}
\usepackage{booktabs}
\usepackage{chngcntr}
\usepackage{todonotes}
\usepackage[nottoc]{tocbibind}
\usepackage{geometry}
\usepackage{tikzsymbols}
\usepackage{accents}
\usepackage[para]{footmisc}
\usepackage{authblk}
\usepackage{blindtext}
\usepackage{footmisc}
\usepackage{subcaption}

\DeclareFontFamily{OMX}{lmex}{}
\DeclareFontShape{OMX}{lmex}{m}{n}{<-> lmex10}{}

\counterwithin{figure}{section}
\counterwithin{equation}{section}
\counterwithin{table}{section}

	% max fraction of floats at top
    	% max fraction of floats at bottom
    %   Parameters for TEXT pages (not float pages):
    \setcounter{topnumber}{2}
    \setcounter{bottomnumber}{2}
    \setcounter{totalnumber}{2}     % 2 may work better
      	% allow minimal text w. figs
       	% require fuller float pages

%%%

%\usepackage{xparse}

%\usepackage{natbib} % bibliography ?
%\bibliographystyle{authoryear}
%\bibpunct{(}{)}{;}{a}{,}{,} % format
%\usepackage{theorem}

\allowdisplaybreaks

\hypersetup{pdfstartview={XYZ null null 1.20}}

\DeclareMathOperator{\R}{\mathbb{R}}

\DeclareMathOperator{\E}{\mathbb{E}}
\DeclareMathOperator{\Wkeit}{P}%{\mathbb{P}}
\DeclareMathOperator{\argmax}{argmax}
\DeclareMathOperator{\argmin}{argmin}

\DeclareMathOperator{\Varianz}{Var}
\DeclareMathOperator{\Covarianz}{Cov}

\DeclareMathOperator{\IET}{IET}

\newcommand{\Covmulti}[1]{\Covarianz\left[#1\right]}

\newtheorem{Theorem}{Theorem}[section]

\newtheorem{Proposition}[Theorem]{Proposition}

\newtheorem{Assumption}{Assumption}[section]

\theoremstyle{definition}

\newtheorem{Remark}{Remark}[section]

\usepackage[babel]{csquotes}
\usepackage[
backend=bibtex,						% biber, bibtex, bibtex8
style=authoryear,					% Standard: numeric, alpahebtic, authoryear, authortitle, verbose, reading, draft
% Endungen: -comp, -verb, -ibid, -icomp	
%			citestyle=numeric,		
%			bibstyle=alphabetic,		
mcite=true,
maxnames=100,						% Max. Anzahl Namen in Literaturverzeichnis
maxcitenames=3,						% Max. Anzahl Namen beim Zitieren
uniquename=false
%			date=comp							% year, short, long, terse, comp, ymd, iso
%			sorting=nty							% Optionen: nty,nyt,nyvt,anyt,ayvtnynt,ydnt,none  Sortierung. n=Name, t=title, y=year, v=volume, a=alphabetic label, d=descending
%			hyperref=auto						& true, false, auto, manual
]{biblatex}
\DeclareNameAlias{author}{last-first}

\bibliography{Kirch_Klein_Data_Segmentation_Invariance}

\defbibheading{head}{\section*{Literaturverzeichnis}}
%\DeclareDocumentCommand{\citet}{ O{} O{} m}
\DefineBibliographyStrings{ngerman}{
	andothers = {{et\,al\adddot}},            
}
%\affil[1]{Institute for Mathematical Statistics, Department of Mathematics, Otto-von-Guericke University Magdeburg email: claudia.kirch\@ovgu.de philipp.klein\@ovgu.de }
\title{\textbf{Moving sum data segmentation for stochastics processes based on invariance}}

\begin{document}
\author{\Large Claudia Kirch\thanks{Institute for Mathematical Statistics, Department of Mathematics, Otto-von-Guericke University Magdeburg, Center for Behavioral Brain Sciences (CBBS); claudia.kirch$ @ $ovgu.de}, \   
	\Large Philipp Klein\thanks{Institute for Mathematical Statistics, Department of Mathematics, Otto-von-Guericke University Magdeburg, philipp.klein$ @ $ovgu.de}
}
	% BIBLATEX-Befehle
	\DeclareFieldFormat{citehyperref}{%
		\DeclareFieldAlias{bibhyperref}{noformat}% Avoid nested links
		\bibhyperref{#1}}
	
	\DeclareFieldFormat{textcitehyperref}{%
		\DeclareFieldAlias{bibhyperref}{noformat}% Avoid nested links
		\bibhyperref{%
			#1%
			\ifbool{cbx:parens}
			{\bibcloseparen\global\boolfalse{cbx:parens}}
			{}}}

	\savebibmacro{cite}
	\savebibmacro{textcite}
	%\savebibmacro{citeauthor}
	%\savebibmacro{citeyear}
	
	\renewbibmacro*{cite}{%
		\printtext[citehyperref]{%
			\restorebibmacro{cite}%
			\usebibmacro{cite}}}
	
	\renewbibmacro*{textcite}{%
		\ifboolexpr{
			( not test {\iffieldundef{prenote}} and
			test {\ifnumequal{\value{citecount}}{1}} )
			or
			( not test {\iffieldundef{postnote}} and
			test {\ifnumequal{\value{citecount}}{\value{citetotal}}} )
		}
		{\DeclareFieldAlias{textcitehyperref}{noformat}}
		{}%
		\printtext[textcitehyperref]{%
			\restorebibmacro{textcite}%
			\usebibmacro{textcite}}}
	
	\DeclareCiteCommand{\citeauthor}
	{\boolfalse{citetracker}%
		\boolfalse{pagetracker}%
		\usebibmacro{prenote}}
	{\ifciteindex
		{\indexnames{labelname}}
		{}%
		\printtext[bibhyperref]{\printnames{labelname}}}
	{\multicitedelim}
	{\usebibmacro{postnote}}

	\DeclareCiteCommand{\citeyear}
	{}
	{\bibhyperref{\printdate}}
	{\multicitedelim}
	{}

	\DeclareCiteCommand{\citetitle}
	{\bibsentence
		\usebibmacro{prenote}}
	{\printtext[bibhyperref]{\usebibmacro{title}}}
	{\multicitedelim}
	{\usebibmacro{postnote}}

	\newcommand*{\fullref}[1]{\hyperref[{#1}]{\hyperref{#1}(\ref*{#1})}}

	\newcommand{\lam}{\lambda}
	\newcommand{\intl}{\int \limits}
	\newcommand{\eps}{\vartau}
	\newcommand{\limm}{\lim \limits}   
	\newcommand{\bigcupp}{\bigcup \limits}   
	\newcommand{\bigcapp}{\bigcap \limits}   
	\newcommand{\summ}{\sum \limits}

	\newcommand{\bB}{\mathbf{B}}

	\newcommand{\bmu}{\boldsymbol{\mu}}
	\newcommand{\bSigma}{\boldsymbol{\Sigma}}
	\newcommand{\bW}{\mathbf{W}}
	\newcommand{\bR}{\mathbf{R}}
	\newcommand{\bM}{\mathbf{M}}
	\newcommand{\bZ}{\mathbf{Z}}
	\newcommand{\bu}{\mathbf{u}}
	\newcommand{\bY}{\mathbf{Y}}
	\newcommand{\bd}{\mathbf{d}}
	\newcommand{\bX}{\mathbf{X}}
	\newcommand{\bRc}{\widetilde{\mathbf{R}}}

	\newcommand{\Rstat}{{\tt{R}} }
	\newcommand{\Rstudio}{{\tt{RStudio}} }

	% Klammern
	\newcommand{\lk}{\left\lbrace}
	\newcommand{\rk}{\right\rbrace}
	\newcommand{\geschweift}[1]{\left\lbrace #1 \right\rbrace}
	\newcommand{\ceil}[1]{\left\lceil #1 \right\rceil}
	\newcommand{\floor}[1]{\left\lfloor #1 \right\rfloor}
	\newcommand{\linksoffen}[1]{\left( #1 \right]}
	\newcommand{\rechtsoffen}[1]{\left[ #1 \right)}

	\newcommand{\Anz}[2]{\process_{\left(#1 T,#2 T\right]}}
	\newcommand{\Anzreskaliert}[2]{\process_{\left(#1,#2\right]}}
	\newcommand{\Anznormiert}[2]{\processnormalized_{\left(#1,#2\right]}}
	\newcommand{\Verteilungskonvergenz}{\overset{\mathcal{D}}{\longrightarrow}}
	\newcommand{\Stochastischekonvergenz}{\overset{P}{\longrightarrow}}

	% Stochastik
	
	\newcommand{\EW}[1]{\E\left[#1\right]}
	\newcommand{\Var}[1]{\Varianz\left[#1\right]}
	\newcommand{\Cov}[2]{\Covarianz\left[#1,#2\right]}
	\newcommand{\Prob}[1]{\Wkeit\left(#1\right)}
	\newcommand{\supp}[1]{\underset{#1}{\sup}}
	\newcommand{\limsupp}[1]{\underset{#1}{\limsup}}
	\newcommand{\inff}[1]{\underset{#1}{\inf}}
	\newcommand{\liminff}[1]{\underset{#1}{\liminf}}
	\newcommand{\maxx}[1]{\underset{#1}{\max}}
	\newcommand{\minn}[1]{\underset{#1}{\min}}
	\newcommand{\argmaxx}[1]{\underset{#1}{\argmax}}
	\newcommand{\argminn}[1]{\underset{#1}{\argmin}}
	\newcommand{\abs}[1]{\left|#1\right|}

	\newcommand{\process}{X}
	\newcommand{\processnormalized}{Y}
	\newcommand{\stddev}{\sigma}
	\newcommand{\stddevlimit}{\rho}
	\newcommand{\meanchange}{d}
	\newcommand{\mean}{\mu}
	\newcommand{\cporig}{c}
	\newcommand{\cp}{\xi}
	\newcommand{\cpdist}{\Delta}
	\newcommand{\bandwidthorig}{h}
	\newcommand{\bandwidth}{\gamma}
	\newcommand{\signal}{\mathbf{m}}

	\newcommand{\noise}{\boldsymbol{\Lambda}}%{\mathbf{N}}
	\newcommand{\signalsquareterms}{D}
	\newcommand{\diffusionsquareterms}{\mathbf{N}}
	\newcommand{\cpestimator}{\widehat{\cp}}
	\newcommand{\cpestimatorinterval}{\widehat{\cp}}
	\newcommand{\cpestimatorlocalmax}{\widehat{\cp}}
	\newcommand{\cporigestimator}{\widehat{\cporig}}
	\newcommand{\numbercp}{q}
	\newcommand{\numbercpestimator}{\widehat{q}}
	\newcommand{\MOSUM}{G}
	\newcommand{\MOSUMrealtime}{G}
	\newcommand{\Wcirc}{\accentset{\circ}{W}}
	\newcommand{\MOSUMorder}{G}
	\newcommand{\exceedingmoment}{\delta}
	\newcommand{\localchangefctone}{f}
	\newcommand{\localchangefcttwo}{g}
	\newcommand{\stddevMOSUM}{\Sigma}
	\newcommand{\stddevMOSUMestimator}{\widehat{\Sigma}}
	\newcommand{\localmaxcriterion}{\eta}
	\newcommand{\intervalcriterion}{\eps}
	\newcommand{\thr}{\beta}
	\newcommand{\mosum}{MOSUM}
	
	\newcommand{\dummytext}{\textcolor{white}{a}}
	\newcommand{\zentrierenalt}[1]{\vcenter{\hbox{$\displaystyle #1 $}}}
	\newcommand{\zentrieren}[1]{\vcenter{\hbox{$ \overset{\dummytext}{#1} $}}}

\maketitle

%\begin{titlepage}
%%	\author{\Large Claudia Kirch\footnotemark\footnotemark[\value{note}]
%% \and Philipp Klein\footnotemark[\value{note}]}
%	\footnotetext{Otto-von-Guericky-University, Magdeburg, Germany, Institute for Mathematical Statistics, Department of Mathematics; claudia.kirch@ovgu.de philipp.klein@ovgu.de \label{note}}
%	\title{\textbf{Moving sum data segmentation for stochastics processes based on invariance}
%	}
%\date{}
%\maketitle

\begin{center}
	\begin{minipage}{0.875\textwidth}
\begin{center}
		\textbf{Abstract}\\
\end{center}
The segmentation of data into stationary stretches also known as multiple change point problem is important for many applications in time series analysis as well as signal processing. Based on strong invariance principles, 
we analyse data segmentation methodology using moving sum (\mosum) statistics for a class of regime-switching multivariate processes where each switch results in a change in the drift. In particular, this framework includes the data segmentation of multivariate partial sum, integrated diffusion and renewal processes even if the distance between change points is sublinear. 
We study the asymptotic behaviour of the corresponding change point estimators, show consistency and derive the corresponding localisation rates which are minimax optimal in a variety of situations including an unbounded number of changes in Wiener processes with drift. Furthermore, we derive the limit distribution of the change point estimators for local changes --  a result that can in principle be used to derive confidence intervals for the change points.  \end{minipage}
\end{center}
\noindent\textbf{Keywords:} Data segmentation, Change point analysis, moving sum statistics, multivariate processes, invariance principle, regime-switching processes\\\\
\textbf{MSC2020 classification:} 62M99; 62G20, 62H12
%\end{titlepage}
	
\section{Introduction}\label{introduction}
The detection and localisation of structural breaks has a long tradition in statistics, dating back to \cite{Page}. Nevertheless, there is still a large maybe even increasing interest in this topic surely also because 
change point analysis  is broadly applicable in a number of fields such as neurophysiology (see \cite{Messer14}), genomics (compare \cite{Olshen}, \cite{NiuZhang}, \cite{LiMunkSieling}, \cite{ChanChen}), finance (\cite{Aggarwal}, \cite{ChoFryzlewicz}), astrophysics (see \cite{FischEckleyFearnhead}) or oceanographics (\cite{Killick10}). 

A large amount of research deals with the detection of changes in univariate time series in particular changes in the mean (compare \cite{CsorgoHorvath97} for an overview) where recently also applications to continuous time stochastic processes,  functional or high-dimensional panel data (see e.g.\ \cite{horvath2014extensions}). However, extensions to the multiple change point problem that aims at segmenting the data into stationary stretches beyond changes in the mean in time series data are much more scarce.

Generally, data segmentation methods can roughly be split up in two approaches: The first approach first introduced by \cite{Yao} in the context of i.i.d.\ normally distributed data using the Schwarz' criterion aims at optimizing suitable objective functions.
\cite{Kuehn} extended this approach to processes in a setting closely related to the one in this paper albeit only allowing for univariate processes and a finite number of change points. Further approaches include e.~g.\ least-squares (\cite{YaoAu}) or  the quasi-likelihood-function (\cite{Braun}). Generally, such approaches are computationally expensive, such that there is another body of work proposing fast algorithms e.g.\ using dynamic programming (\cite{Killick12}, \cite{Maidstone}).

A second approach is based on hypothesis testing, where e.g.\ binary segmentation introduced by \cite{Vostrikova} recursively uses tests constructed for the at-most-one-change situation. This arises several problems including the observation that detection power can be poor if the set of change points is unfavourable, such that several extensions have been proposed in the literature such as circular binary segmentation (\cite{Olshen}) or wild binary segmentation (\cite{Fryzlewicz}).

\subsection*{Connection to existing work}
Another class of test-based methods uses moving sum (\mosum) statistics which were first introduced by \cite{BauerHackl}. They are particularly useful in the context of localising multiple change points and recently have broadly been used for the detection and  the estimation of change points, see e.g.\ \cite{yau2016inference} for changes in autoregressive time series,
\cite{EichingerKirch} in a hidden Markov framework and \cite{ChoKirchMeier} as well as
\cite{ChoKirch} who proposed a two-stage data segmentation procedure based on multiscale \mosum\  statistics. This work extends the results of \cite{EichingerKirch} to a more general setting including multivariate mean changes, changes in diffusion as well as in renewal processes. Our results also lays the foundations for the analysis of a two-step procedure as in \cite{ChoKirch}.
	 
	 %Furthermore, there exists some work in the change point analysis of multivariate processes using a number of different approaches like U-statistics (\cite{MattesonJames}), contrast functions (\cite{Lavielle}), depth functions (\cite{Chenouri}), CUSUM-based procedures (\cite{AueHorvath}) or
	 %\mosum-based procedures (\cite{Preuss}).\\
%Bayesian approaches (\cite{Shaochuan}) and change point estimation in networks (\cite{Banerjee}).

	 %$Furthermore,  
	 \cite{Messer14} propose a bottom-up-approach combining several moving-sum statistics to obtain change point estimators in univariate renewal processes. 
	 Our work extends these results in several ways: First, \cite{Messer14} do not show consistency of the change point estimators neither do they derive localisation rates, which is one of the main results of this work. Furthermore, in addition to results for \mosum\  procedures with linear bandwidth (in the sample size) as in \cite{Messer14} we obtain results for sublinear bandwidths allowing in particular to obtain consistent estimators in situations where the distance between change points is sublinear. Additionally, we go beyond the univariate case including some multivariate point processes based on renewal processes in our analysis. 
	 Sequential change point methodology for renewal processes has been proposed by~\cite{GutSteinebach02} and \cite{GutSteinebach09}, for diffusion processes by \cite{Mihalache}.
	
	 We analyse a more general model of regime-switching multivariate processes including multivariate partial sum, renewal as well as diffusion processes.  We require the processes to fulfill a multivariate invariance principle, where processes switch (possibly with a number increasing to infinity with increasing sample size)  between finitely many regimes with each switch resulting in a change in the drift.
	 A univariate version of that model with  at-most-one change point has  been considered by \cite{HorvathSteinebach} and \cite{KuehnSteinebach}. A univariate version for finitely many change points has been considered by \cite{Kuehn} where consistency for the number of change points has been shown. Those results are now extended to include \mosum\  methodology for the estimation of a multiple (possibly unbounded) number of change points in a multivariate setting, where we achieve a minimax optimal separation rate in addition to a minimax optimal localisation rate (for the change point estimators) in case of a bounded number of change points as well as for Wiener processes with drift (see Remark~\ref{Remark_minimax} below).

	  \subsection*{Organization of the material}
	  In Subsection \ref{subsection_model}, we introduce the multiple change point model we consider followed by some examples of processes fulfilling the model in Subsection \ref{subsection_examples}. In Section \ref{section_procedures}, we describe how to  estimate change points based on \mosum\  statistics: First, we introduce the \mosum\  statistics in  \ref{subsection_mosum_statistics}, before  presenting the estimators for the structural breaks in  \ref{subsection_cp_estimators}. In \ref{subsection_threshold} we derive some asymptotic results for the \mosum\  statistics that are required for threshold selection and can also be used in a testing context. In Section \ref{section_consistency} we show that the corresponding data segmentation procedure is consistent. Finally,  we derive the localisation rates in addition to the corresponding asymptotic distribution of the change point estimators for local changes. In Section \ref{subsection_simulation}, we present some results from a small simulation study. The proofs can be found in Appendix \ref{appendix}.

%\newpage
\section{Multiple change point problem}\label{section_seg_procedure}
%In this section, we are going to introduce the multiple change point model that we consider. We are then going to give a number of examples fulfilling this model and in the last part, we are going to introduce a \mosum\  statistic in order to estimate the locations of the mean changes.
In this section we introduce the  general multiple change model for which we derive the theoretic results.  In particular, this model includes changes in multivariate renewal processes as a special case which was the original motivation for this work.
\subsection{Model} \label{subsection_model}
Consider $P<\infty$ time-continuous $p$-dimensional stochastic processes $\{\bR_{t,T}^{(j)}:0\le t\le T\}$ with (unknown) drift $(\bmu_T^{(j)}\cdot t)$ and (unknown) covariance $(\bSigma_{j,T}\cdot t)$ fulfilling the following joint invariance principle.

The observed process is then assumed to switch between these $P$ processes (states).
%\footnote{im Moment habe ich das Gefuehhl dass das reicht, wenn nicht muss man erweitern, wenn du mit deinen annahmen weitermachen musst, dann aber bitte nicht durch $t^{\ldots}$ teilen, sondern durch $\max(1,t^{\ldots})$, dann kannst du dir auch die Abschneiderei sparen. deine Unabaengigkeitsgeschichten folgen aus den unabhängigen Zuwächsen, ebenso rückwärts, allerdings natürlich nur mit der globalen Konvergenzrate, aber ich hatte spontan keine Stelle gesehen, wo man die strengere braucht, kann ich aber uebersehen haben. \textcolor{blue}{Wüsste auch jetzt nicht, was dagegen spricht.}}

\begin{Assumption}\label{invariance_principle}\ \\
	Denote the joint process by $\bR_{t,T}=\left({\bR_{t,T}^{(1)}}^{\prime},\ldots,{\bR_{t,T}^{(P)}}^{\prime}\right)^{\prime}$ as well the joint drift by $\bmu_T=\left({\bmu_T^{(1)}}^{\prime},\ldots,{\bmu_T^{(P)}}^{\prime}\right)^{\prime}$, where $^{\prime}$ indicates the matrix transpose.
	 For every $ T>0 $ there exist $(p\cdot P)$-dimensional Wiener processes $\bW_{t,T}$ with covariance matrix $\bSigma_T$ and	\begin{align*}
		\bSigma_{T}^{(i)}=(\bSigma_T(l,k))_{l,k=p\,(i-1)+1,\ldots,p\,i}
	\end{align*}
	with
	\begin{align*}
		\|{\bSigma_{T}^{(i)}}\|=O(1),\qquad \|{\bSigma_{T}^{(i)}}^{-1}\|=O(1),
\end{align*}
	such that, possibly after a change of probability space,  it holds that for some sequence $\nu_T\to 0$
	\begin{align*}
	\sup_{0\le t\le T}\|\bRc_{t,T}-\bW_{t,T}\|=
	\sup_{0\le t\le T}\|\left(\bR_{t,T}-\bmu_T\,t\right)-\bW_{t,T}\|=		O_P\left(T^\frac{1}{2}\,\nu_T\right),
	\end{align*}
	where $\bRc_{t,T}=\bR_{t,T}-\bmu_T\,t$ denotes the centered process.
\end{Assumption}
If these $P$ processes  are independent, which is a reasonable assumption in a switching context, the joint invariance principle reduces to the validity of an invariance principle for each process.

The assumption on the norm of the covariance matrices is equivalent to the smallest eigenvalue of $\bSigma_T^{(i)}$ being bounded in addition to being bounded away from zero (both uniformly in $T$). In many situations, the covariance matrices will not depend on $T$, in which case this assumption is automatically fulfilled under positive definiteness. The convergence rate $\nu_T$  in the invariance principle typically depends on the number of moments that exist. Roughly speaking, the more moments the original process has, the faster $\nu_T$ converges.

We now observe a process $\bZ_{t,T}$ with increments switching between the above processes  at some unknown change points $ 0=\cporig_{0}<\cporig_{1}<\ldots<\cporig_{q_T}<\cporig_{q_T+1}=T$, where $q_T$ can be bounded or unbounded.
More precisely, we observe for $\cporig_{\ell}<t\le \cporig_{\ell+1}$
%\footnote{ \textcolor{green}{bitte ueberprüfen ob das Gleichheitszeichen in der Schreibweise an der richtigen Stelle ist}}
\begin{align}
	\bZ_{t,T}=\left(\bR_{t,T}^{(\cporig_{\ell}+1)}-\bR_{\cporig_{\ell},T}^{(\cporig_{\ell}+1)}\right)+\sum_{j=1}^{\ell
}\left(\bR_{\cporig_j,T}^{(\cporig_j)}-\bR_{\cporig_{j-1},T}^{(\cporig_j)}\right).
	\label{eq_model}
\end{align}
%\begin{align}
%	\bZ_{t,T}=\left(\bR_{t,T}^{(\cporig_{\ell}+1)}-\bR_{\cporig_{\ell},T}^{(\cporig_{\ell}+1)}\right)+\sum_{j=1}^{\cporig_{\ell}}\left(\bR_{\cporig_j,T}^{(\cporig_j)}-\bR_{\cporig_{j-1},T}^{(\cporig_j)}\right).
%	\label{eq_model2}
%\end{align}
The upper index $^{(\cporig_j)}$ at the process $\bR_{\cdot,T}$ indicates (with a slight abuse of notation) the active regime between the $(j-1)$-st and the $j$-th change point, from which the increments come in that stretch.
Because we concentrate on the detection of changes in the drift, we need to assume that the drift changes between two neighboring regimes,
%\footnote{\textcolor{green}{Bitte ueberprüfen, ob das konsistent ist mit der oben eingeführten Nummerierung, gemeint ist das gleiche wie in der alten version}}
 i.e.\
\begin{align*}
	\bd_{i,T}:=\bmu_T^{(\cporig_i+1)}-\bmu_T^{(\cporig_i)} \neq 0 \quad \text{for all }i=1,\ldots,q_T,
\end{align*}
where $\bd_{i,T}$ is bounded but we allow for $\bd_{i,T}\to 0$ as long as the convergence is slow enough (see Assumption \ref{bandwidth}).
For ease of notation we drop the dependency on $T$ for all above quantities except $q_T$ in the following except in situations where it helps clarify the argument. The aim of this paper is to estimate the number and location of the change points and prove consistency of the estimator for the number of change points in addition to deriving localisation rates for the change point estimators.

%Related univariate models in an at-most-one-change situation have been considered by several authors  in the literature. %\cite{AntochHuskova}, \cite{CsorgoHorvath88} considered models with independent errors while others (\cite{Bai}, \cite{Davis}, \cite{Giraitis}, \cite{Horvath93}, \cite{Horvath97}, \cite{HorvathKokoszka}, \cite{HorvathSteinebach}, \cite{Horvathetal}, \cite{KokoszkaLeipus}, \cite{Picard}) allowed for dependencies. For an overview, compare the monograph of \cite{CsorgoHorvath97}.  \cite{HuskovaSteinebach00,HuskovaSteinebach02} develop change point tests for the abrupt as well as gradual change situation, for which
The corresponding univariate model with at most one change was first considered by \cite{HorvathSteinebach} and extended to a gradual change setting by \cite{steinebach2000some}.  \cite{KirchSteinebach} 
%\footnote{\textcolor{green}{todo}} 
prove validity for  corresponding permutation tests. Furthermore, \cite{GutSteinebach02} 
%\footnote{\textcolor{green}{todo: Ich glaube das waren zwei paper, einmal das Basisverfahren und einmal die Stopping time, bitte nachschauen}} 
develop sequential change point tests and analyse the corresponding stopping time (\cite{GutSteinebach09}).
%\footnote{\textcolor{green}{cite}}.

A related univariate multiple change situation
 with a bounded number of change points has been considered by  \cite{KuehnSteinebach}, who propose to use a Schwarz information criterion for change point estimation. However, this methodology is computationally expensive with quadratic computational complexity, which is one of the reasons why we propose an alternative methodology  based on a single-bandwidth moving sum (\mosum) statistic in order to estimate the change points. We will show that the rescaled change point estimators are consistent and derive the corresponding localisation rates.

\subsection{Examples}\label{subsection_examples}
In this section, we give three important examples fulfilling the above model assumptions, namely partial sum-processes, renewal processes as well as integrals of diffusion processes including Ornstein-Uhlenbeck and Wiener processes with drift. A detailed analysis of \mosum\  procedures  for detecting changes in (univariate) renewal processes extending the work by ~\cite{Messer14} was the original motivation for this work and is covered by this much broader framework.

\subsubsection{Partial-Sum-Processes}\label{sec_partial_sums}
This first example extends the classical multiple changes in the mean model:

Let $\bX^{(i)}_1,\bX^{(i)}_2,\ldots$ be a time series with $ \EW{\bX^{(i)}_1}=0$ and $\Covmulti{\bX^{(i)}_1}=I_p$  and all $i=1,\ldots,P$.
Let 
\begin{align*}
	\bR_{t}^{(i)}=\sum_{j=1}^{\lfloor t\rfloor}\left(\bmu^{(i)}+ {\bSigma_{T}^{(i)}}^{1/2}\bX^{(i)}_j\right).
\end{align*} 
The corresponding process fulfills Assumption \ref{invariance_principle} in a wide range of situations. For example, \cite{Einmahl} 
%\footnote{\textcolor{green}{the classical results of  \cite{Major} - das ist vermutlich nicht multivariate, müssen wir noch mal nach dem entsprechenden multivariaten ergebnis schauen, steht vermutlich in Kerstins diss}} 
shows the validity in the case that  $\bX_1, \bX_2,\ldots$ with $\bX_j=\left(\bX^{(1)}_j,\ldots,\bX^{(P)}_j  \right)^{\prime}$ are i.i.d.~with $ \EW{\|\bX_1\|^{2+\delta}}<\infty $ for some $\delta>0$.
Additionally,  \cite{KuelbsPhilipp} state an invariance  principle for mixing random vectors in Theorem 4, additionally there are many corresponding univariate results under many different weak-dependency formulations.

For $\bX^{(i)}=\bX^{(1)}$ (and $\bSigma^{(i)}=\bSigma^{(1)}$)  for all $i$, then we are back to the classical multiple mean change problem that has been considered  in many papers in particular for the univariate situation, see e.g.~the recent survey papers by \cite{fearnhead2020relating} or \cite{cho2020data}.
 % e.~g.~by \footnote{entweder hier noch mal eine kurze Literaturreview geben oder auf die Einleitung verweisen, wennd as dort detailliert gemacht wurde. Wieviele es schon zum multivariaten Fall gibt, bin ich mir nicht sicher, muesste man vielleicht noch mal explizit suchen. Aber vielleicht macht es schon Sinn, das hier noch mal kurz zu wiederholen \textcolor{red}{Ich habe im multivariaten Fall nicht viel gefunden - im Wesentlichen das, was oben in der Einleitung steht.}}
%\cite{BauerHackl}, \cite{Braun}, \cite{ChanChen}, \cite{Huskova}, \cite{HuskovaSlaby}, \cite{NiuZhang}, \cite{Olshen}, \cite{Yao}  and \cite{EichingerKirch} in the univariate case and in some variant by
 %\cite{AueHorvath}, \cite{Aueetal}, \cite{Horvathetal} and \cite{Lavielle} in the multivariate case.
%\footnote{\textcolor{red}{In der Einleitung stehen schon einige Beispiele. Deshalb habe ich hier keine neuen Referenzen hinzugefügt.}}

\subsubsection{Renewal and some related point processes}\label{sec_ex_renewal}
The second example aims at finding structural breaks in the rates of renewal and some related point processes:

We consider $P$ independent sequences of $p$-dimensional point processes that are related to renewal processes in the following way: For each $i=1,\ldots,P$ we start with $\tilde{p}\ge p$ independent renewal processes $\widetilde{R}_{t,j}^{(i)}$, $j=1,\ldots,\tilde{p}$, from which we derive a $p$-dimensional point process $\bR_t^{(i)}=\mathbf{B}^{(i)} \left(\widetilde R_{t,1}^{(i)},\ldots,\widetilde R_{t,\tilde p}^{(i)}\right)^{\prime}$, where $\mathbf{B}^{(i)}$ is a $(p\times\tilde{p})$ - matrix with non-negative integer-valued entries.
By  Lemma 4.2 in \cite{SteinebachEastwood}  Assumption \ref{invariance_principle} is fulfilled for a block-diagonal $\bSigma_T$ with
\begin{align*}
	&\bSigma_T^{(i)}=\mathbf{B}^{(i)}\mathbf{D}\left(\frac{\boldsymbol{\sigma}^2(i)}{\boldsymbol{\mu}^{3}(i)}\right){\mathbf{B}^{(i)}}^{\prime},\\
	&\text{with }\mathbf{D}\left(\frac{\boldsymbol{\sigma}^2(i)}{\boldsymbol{\mu}^{3}(i)}\right)=\operatorname{diag}\left(\frac{\sigma_1^2(i)}{\mu_1^{3}(i)},\ldots,\frac{\sigma_{\tilde{p}}^2(i)}{\mu_{\tilde{p}}^{3}(i)}\right),
\end{align*}
 where $\mu_j(i)$ and $\sigma_j^2(i)$ are the mean and variance of the corresponding inter-event times. \cite{SteinebachEastwood} and \cite{Csenki} consider $\tilde{p}=p$ but use inter-event times that are dependent for $j=1,\ldots,p$. In such a situation, the invariance principle in  Assumption \ref{invariance_principle} still holds if the intensities are the same across components with  $\Sigma_T^{(i)}=\Sigma_{\IET}^{(i)}/\mu_1^3(i)$, where $\Sigma_{\IET}^{(i)}$ is the covariance of the vector of inter-event times -- a setting that we adopt in the simulation study. If the intensities differ, then by \cite{SteinebachEastwood} an invariance principle towards a Gaussian process can still be obtained, but this is no longer a multivariate Wiener process. While each component is a Wiener process, the increments from one component may depend on the past of another. Many of the below results can still be derived in such a situation, however, such a model does not seem to be very realistic for most applications as the stochastic behavior of the increments of one component depends on the lagged behavior of the other components, where the lag increases with time. While a lagged dependence is realistic in many situations, in most situations one would expect this lagged-dependence to be constant across time.

%interarrival times
%$\bX^{(i)}_j=\left( X^{(i)}_{j,1},\ldots,X^{(i)}_{j,p}  \right)^\prime$ such that the joint vectors  $\bX_1, \bX_2, \ldots$  with $\bX_j=\left( {\bX^{(1)}_j}^{\prime},\ldots, {\bX^{(P)}_j}^{\prime} \right)^\prime$ are i.i.d.\ with positive definite covariance matrix
%$\Covmulti{\bX_1}$
%and $ \EW{\|\bX_1\|^{2+\exceedingmoment}}<\infty $ for some $0<\delta<2$. Then, the $p$-dimensional renewal processes $\bR_t^{(i)}=\left(R_{t,1}^{(i)},\ldots,R_{t,p}^{(i)}\right)^{\prime}$ defined as
%	\[ 
%		\bR_{t,j}^{(i)}=\min\left\{n\ge 1: \sum_{l=1}^nX^{(i)}_{l,j}>t\right\}
%		\]
%		fulfills  Assumption \ref{invariance_principle} with $\bSigma=\mathbf{D}(\boldsymbol{\mu})^{-3/2}\Covmulti{\bX_1}\mathbf{D}(\boldsymbol{\mu})^{-3/2}$, where $ \mathbf{D}(\boldsymbol{\mu})=\operatorname{diag}(\EW{\bX_1}) $ by Lemma 4.2 in \cite{SteinebachEastwood}.

%		Messer et al.\footnote{bitte zitieren, da gibt es ja auch mehrere Paper, da bitte eine Übersicht machen jeweils mit 1-Satz Zusammenfassungen, was da jetzt gerade wichtig ist} consider this model for univariate renewal processes and propose to use moving sum statistics related to those we will discuss in the next section with linear bandwidth, but they have not derived any consistency results for them\footnote{ist doch so, ja? \textcolor{green}{Genau.}}. In this paper, we extend their results to sublinear bandwidths and prove the consistency of the corresponding estimators as well as their rates of convergence.
		\cite{Messer14} 
%		\footnote{\textcolor{green}{bitte zitieren, da gibt es ja auch mehrere Paper, da bitte eine Übersicht machen jeweils mit 1-Satz Zusammenfassungen, was da jetzt gerade wichtig ist}} 
		consider this model for univariate renewal processes with varying variance. They propose a multiscale procedure based on \mosum\  statistics related to those we will discuss in the next section using linear bandwidths. In \cite{Messer17}, they extend the procedure to processes with weak dependencies. They show convergence in distribution of the \mosum\   statistics to functionals of Wiener processes similar to the results that we obtain and analyze the behaviour of the signal term in \cite{Messer172}. However, they have not derived any consistency results for the change point estimators. In this paper, we extend their results to sublinear bandwidths and prove the consistency of the corresponding estimators as well as their localisation rates.
%		\footnote{ist doch so, ja? \textcolor{green}{Genau.}}
\subsubsection{Diffusion processes}
Clearly, switching between independent (or components of a multivariate) Brownian motion with drift is included in this framework. Additionally,
\cite{Heunis} and  \cite{Mihalache} derive invariance principles in the context of diffusion processes including Ornstein-Uhlenbeck processes among others. Let $ \left(\bX_t\right)_{t\ge0} $ be a stochastic process in $ \R^N $ satisfying a stochastic differential equation 
\[ 
d\bX_t=\bmu\left(\bX_t\right)dt+ \bSigma\left(\bX_t\right)d\bB_t
 \]
 with respect to an $ n $-dimensional standard Wiener process $ \left(\bB_t\right)_{t\ge0} $ and let $ \bmu,\bSigma $ be globally Lipschitz-continuous. 
 Under some conditions on $ f:\R^N\to\R^p $, as given by \cite{Heunis},   relating to $ \bmu,\bSigma $, which in particular guarantee that the function $f$ applied to the (invariant) diffusion results in a centered process, 
 there exists a $ p $-dimensional Wiener process $ \left(\bW_t\right)_{t\ge0} $ and some $ \eta>0 $ such that
 \[ 
 \left\|\int_0^T f(\bX_s)\ ds-\bW_T\right\|=O\left(T^{1/2-\eta}\right),
  \]
  where $ \left(\bX_t\right)_{t\ge0} $ either is a solution to the SDE with fixed starting value $ \bX_0=y_0 $ or a strictly stationary solution  with respect to an invariant distribution.\\
  Furthermore, in the case of a one-dimensional stochastic diffusion process, \cite{Mihalache} showed for some $ L^2 $-functions fulfilling constraints depending on $ \bmu, \bSigma $ that there exists a strong invariance principle for the integrals of diffusion processes with a rate of $ O(\left(T\log_2 T\right)^{1/4}\sqrt{\log T}) $.

\section{Data segmentation procedure} \label{section_procedures}
Now, we are ready to introduce a \mosum-based data segmentation procedure for stochastic processes following the above model:

\subsection{Moving sum statistics} \label{subsection_mosum_statistics}
At every change point, the drift of the process that is active to the left of that change point is different from the drift of the  process that is active to the right of the change point. 
Consequently, the increment of the process $\bR$ to the left will systematically differ from the one to the right. On the other hand, in a stationary stretch away from any change point both increments will be approximately the same as they are estimating the same drift. It is this observation that gives rise to the following moving sum (\mosum) statistic that is based on the moving difference of increments with bandwidth $h=h_T$
\begin{align}\label{eq_mosum}
	\bM_t=\bM_{t,T,h_T}(\bZ)&=\frac{1}{\sqrt{2h}}\,\left[\left(\bZ_{t+h}-\bZ_t\right)-\left( \bZ_t-\bZ_{t-h} \right)\right]\notag\\
	&=\frac{1}{\sqrt{2h}}\,\left(\bZ_{t+h}-2\bZ_t+\bZ_{t-h}\right).
\end{align}
If there is no change, then this difference will fluctuate around 0.  On the other hand close to a change point, this difference will be different from 0. 
Ideally, the bandwidth should be chosen to be as large as possible (to get a better estimate obtained from a larger 'effective sample size' of the order $h$). On the other hand, the increments should not be contaminated by a second change as this can lead to situations where the change point can no longer be reliably localised by the signal.
Indeed, we need the following assumptions on the bandwidth for a change to be detectable:
\begin{Assumption}\ \label{bandwidth}
For $\nu_T$ as in Assumption~\eqref{invariance_principle}	the bandwidth $h<T/2$ fulfills
		\begin{align*}
&	\frac{\nu_T^2\,T\,\log T}{h}\to 0.
\end{align*}	
Furthermore, it isolates the $i$-th change point in the sense of
\begin{align}
		&		h\le \frac 1 2 \, \cpdist_i,\qquad \text{where }\cpdist_i=\min(\cporig_{i+1}-\cporig_i,\cporig_i-\cporig_{i-1}). \label{condition_bandwidth}
%\min\left(\cporig_1,T-\cporig_{q},\frac 1 a\, \min_{i=2,\ldots,q}(\cporig_i-\cporig_{i-1})\right), 
\end{align}	
%\footnote{Ich glaube, es ist besser man geht jetzt nicht noch auf den Randeffekt ein,d ass man dort kein a braucht. falls doch sowas wie 'where for $i=1,q$ it is sufficient' \textcolor{green}{Ok.}}
Additionally, the signal needs to be large enough to be detectable by this bandwidth, %\footnote{wir müssen in die beweise schauen, ob das wirklich so korrekt ist, oder ob im Log-Term da doch schon was anderes stehen muss \textcolor{green}{Bis hierhin passt das.}}  
i.e.\
	\begin{align}
		\frac{%\min_{i=1,\ldots,q}
	\|\bd_i\|^2\,h}{\log\left( \frac{T}{h}
			%\min_{i=1,\ldots,q+1}(\cporig_{i}-\cporig_{i-1})} 
	\right)}\to\infty.\label{condition_mean_change}
		\end{align}
\end{Assumption}
Combining \eqref{condition_bandwidth} and \eqref{condition_mean_change} shows that -- with an appropriate bandwidth $h$ -- changes are detectable as soon as %\footnote{je nachdem, wie das oben genau aussieht, muss man noch mal überprüfen, ob das wirklich genau so stimmt \textcolor{green}{Bis jetzt ok.}} 
\begin{align}\label{eq_separation}
		\frac{%\min_{i=1,\ldots,q}
	\|\bd_i\|^2\,\cpdist_i}{\log\left( \frac{T}{\cpdist_i}\right)
			%\min_{i=1,\ldots,q+1}(\cporig_{i}-\cporig_{i-1})} 
	}\to\infty.
		\end{align}
	 In case of the classical mean change model as in Subsection~\ref{sec_partial_sums} this is known to be the minimax-optimal separation rate that cannot be improved (see Proposition 1 of \cite{AriasCastro}).

	 The assumption on the distance of the first and last change point to the boundary of the process in \eqref{condition_bandwidth} can be relaxed as no boundary effects can occur there.

		\subsection{Change point estimators} \label{subsection_cp_estimators}
		The \mosum\  statistic $\bM_t= \signal_{t}+ \noise_t$ as in \eqref{eq_mosum}  decomposes into a piecewise linear signal term $ {\signal_t=\signal_{t,h,T}} $ and a centered noise term $\noise_t=\noise_{t,h,T}$ with %\footnote{ich bin noch nicht ganz sicher, ob ich das bei dem Buchstaben belassen möchte, ist aber ein Makro kann man also leicht ändern. Vorschläge willkommen \textcolor{red}{Gute Frage. N sieht ein bisschen nach Erneuerungsprozess aus, auch wenn es wegen "noise" natürlich Sinn macht. Man könnte evtl. über D nachdenken. Ansonsten bleibt aus dem lateinischen Alphabet ohne tieferen Sinn fast nur noch G und V übrig - wobei, um es mit einem deiner Studenten zu sagen: noise=Geräusch, also G. :-D. Aus dem griechischen Alphabet macht Lambda am meisten Sinn, hauptsächlich weil es noch nicht so stark für andere Sachen verwendet wird.}}. Assuming $ a\ge2 $ in \eqref{condition_bandwidth}, it holds that
		\begin{align}
		&{\sqrt{2h}}	\,\signal_t=\begin{cases}
				(h-t+c_i)\,\bd_i,&\ \text{for }c_i<t\le c_i+h,\\
				0,&\ \text{for }c_i+h<t\le c_{i+1}-h,\\
				(h+t-c_{i+1})\,\bd_{i+1},&\ \text{for }c_{i+1}-h<t\le c_{i+1},
			\end{cases} \label{signal}\\[3mm]
			&	\sqrt{2h}\;\noise_t= \sqrt{2h}\;\noise_t \label{noise}(\bRc) 	\\*
		&\quad =\begin{cases}
				\bRc_{t+h}^{(c_{i+1})}-2\bRc_{t}^{(c_{i+1})}+\bRc_{c_i}^{(c_{i+1})}
				-\bRc_{c_i}^{(c_{i})}+\bRc_{t-h}^{(c_{i})},&\ \text{for }c_i<t\le c_i+h,\\
				\bRc_{t+h}^{(c_{i+1})}-2\bRc_{t}^{(c_{i+1})}+\bRc_{t-h}^{(c_{i+1})},
				&\ \text{for }c_i+h<t\le c_{i+1}-h,\\
				\bRc_{t+h}^{(c_{i+2})}-\bRc_{c_{i+1}}^{(c_{i+2})}+\bRc_{c_{i+1}}^{(c_{i+1})}-2\bRc_{t}^{(c_{i+1})}+\bRc_{t-h}^{(c_{i+1})},
				&\ \text{for }c_{i+1}-h<t\le c_{i+1},	
			\end{cases} \notag
\end{align}
where $ \bRc_t:=\bR_t-t\bmu $ for $ i=0,\ldots,q_T $  and the upper index $c_{j}$ denotes the active regime between the $(j-1)$st and $j$-th change point (with a slight abuse of notation).\\
The
signal term is a
piecewise linear function that takes its extrema at the change points and is 0 outside $h$-intervals around the change points. Additionally,  the noise term is asymptotically negligible compared to the signal term (see Theorem \ref{theorem_threshold} for the corresponding theoretic statement and Figure \ref{figure_signal_noise} for an illustrative example).
\begin{figure}
	\includegraphics*[scale=0.44]{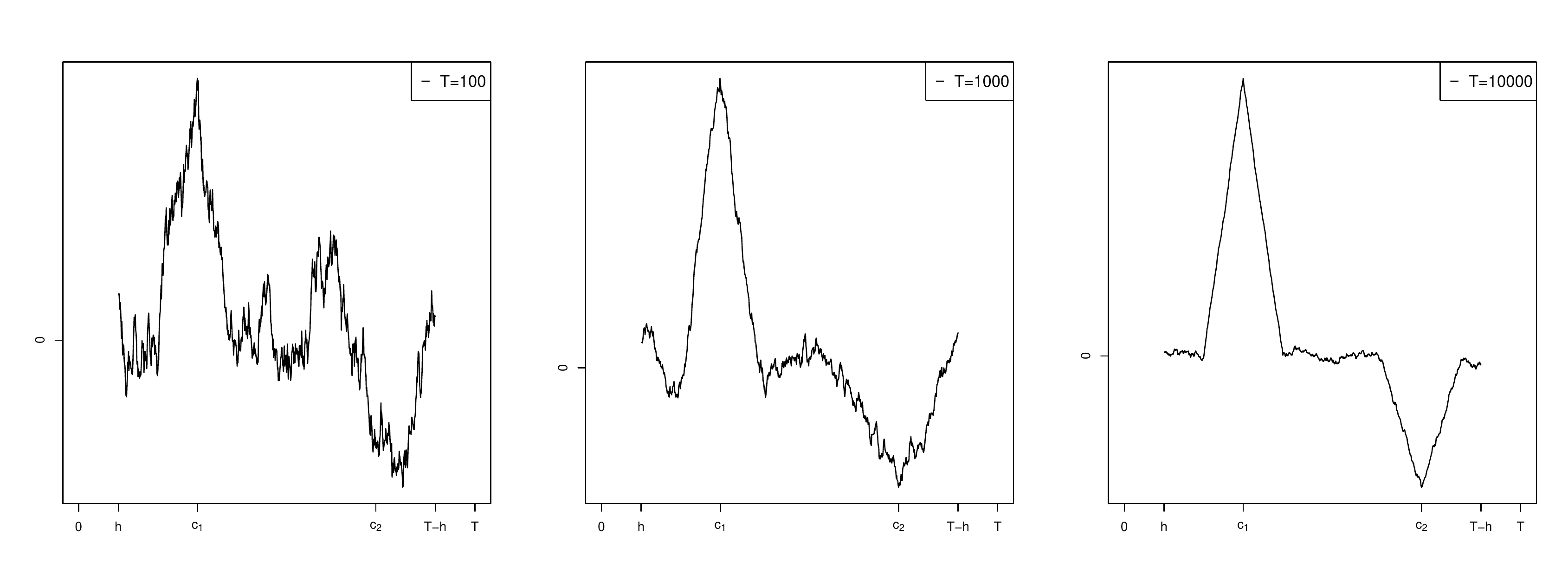}
	\caption{Univariate \mosum\  statistic with $T=100,\, 1\,000,\, 10\,000$ (from left to right), where the noise term (fluctuating around the signal)  becomes smaller and smaller relative to the signal term.}\label{figure_signal_noise}
\end{figure} 
\begin{figure}[t]
			\caption{In the upper panel, the observed event times of a univariate renewal process with 3 change points (i.e.\ 4 stationary segments) are displayed (where the plot needs to be read like a text: It starts in the upper row on the left, then continues in the first row and jumps to the second row and so on). The grey and white regions mark the estimated segmentation of the data while the red intervals mark the true segmentation.\\
		In the lower panel, the corresponding \mosum\  statistic with (relative) bandwidth $h/T=0.07$ is displayed. The grey areas are the regions where the threshold ($\alpha=0.05$ as in Remark~\ref{rem_choice_threshold})  is exceeded (in absolute value). The blue solid lines  indicate the change point estimates obtained as local extrema that fall within the grey area (making them \emph{significant}). The true change points are indicated by the red dashed lines. }\label{figure_rp_example}
	\includegraphics*[width=\textwidth]{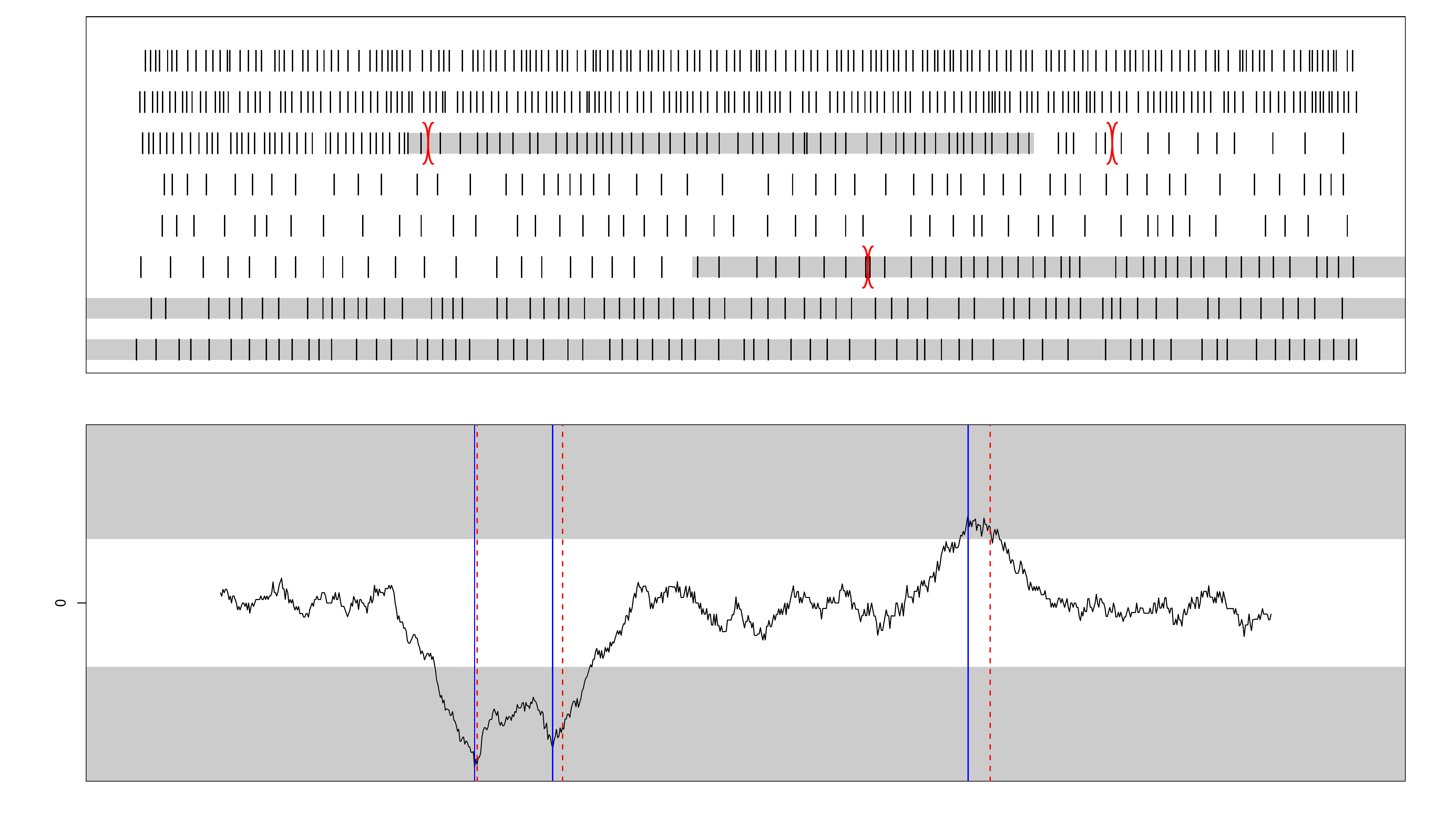}

	\end{figure}

%\newpage
%	\includegraphics*[scale=0.43]{Signal_noise.pdf}
%	\includegraphics*[scale=0.43]{Signal_noise2.pdf}
%\newpage
These observations motivate the following data segmentation procedure, that considers local extrema that are big enough (in absolute value) as change point estimators: 

To this end,  a suitable threshold $\thr=\thr_{h,T}$ is needed to define \emph{significant} time points, where a point $t^*$ is \emph{significant} if %\footnote{$\ge$ or $>$?\textcolor{red}{Eigentlich egal. Spielt asymptotisch keine Rolle. Ich hoffe, dass es konsistent ist.}}
\begin{align}\label{eq_mosum_signif}
	\bM_{t^*}^{\prime}\,\widehat{\mathbf{A}}^{-1}_{t^*}\,\bM_{t^*}\ge\thr
\end{align}
with $\widehat{\mathbf{A}}_{t^*}$ is some symmetric positive definite matrix that may depend on the data fulfilling
\begin{Assumption}
\label{variance_estimator}
\begin{align*}
	&\sup_{h\le t\le T-h}\left\|\widehat{\mathbf{A}}_{t,T}^{-1}\right\|=O_P\left(1\right),\quad %\\
%	\quad \sup_{i=1,\ldots,q_T}\left\|\widehat{\bSigma}_{c_i,T}^{-1/2}-\bSigma_{c_i}^{-1/2}\right\|=o_P\left(\left(\log \frac{T}{h}\right)^{-1}\right),
%&
%	\sup_{i=1,\ldots,q_T}\sup_{|t-c_i|\le h}\left\|\widehat{\bSigma}_{t,T}^{-1}\right\|=O(1), \qquad 
	\sup_{i=1,\ldots,q_T}\sup_{|t-c_i|\le h}\left\|\widehat{\mathbf{A}}_{t,T}\right\|=O_P(1).
\end{align*}
\end{Assumption}
A good (non data-driven) choice  fulfilling this assumption is given by
\begin{align}\label{eq_sigma_t}
	\bSigma_{t}=\bSigma_{t,T}=\bSigma_T^{(c_{i})}
\end{align}
for $\cporig_{i-1}<t\le \cporig_{i}$, which guarantees scale-invariance of the procedure and allows for nicely interpretable thresholds (see Section~\ref{subsection_threshold}). The latter remains true for estimators as long as they fulfill
\begin{align}
	\sup_{i=1,\ldots,q_T}\sup_{|t-c_i|> h}\left\|\widehat{\bSigma}_{t,T}^{-1/2}-\bSigma_t^{-1/2}\right\|=o_P\left(\left(\log \frac{T}{h}\right)^{-1}\right) \label{variance_estimator2}
\end{align}
in addition to the above boundedness assumptions. In particular, this permits local estimators that are consistent only away from change points but contaminated by the change in a local environment thereof. The latter is typically the case for covariance estimators, think e.g.\ of the sample variance contaminated by a change point. In order to not reduce detection power in small samples, it is beneficial if the estimator is additionally consistent directly at the change point, which is also achievable (see e.g.\ \cite{EichingerKirch}).

The choice of the threshold $\thr$ will be discussed in the next section, where we can make use of asymptotic considerations in the no-change situation. 

Typically, there are intervals of significant points (due to the continuity of the signal) such that only local extrema of such intervals actually indicate a change point. To define what a local extremum is, we require  a tuning parameter $0<\localmaxcriterion <1 $. This parameter defines the locality requirement on the extremum, where a  point $t^*$ is a local extremum if it maximizes the absolute  \mosum\  statistic within its $\localmaxcriterion h$-environment, i.e.\ if
\begin{align}
	t^*=\min\left\{ \argmaxx{t^*-\eta h\le t\le t^*+\eta h}\|\bM_t\|.
	\right\} \label{cp_estimator}
\end{align}

The threshold $\beta$  distinguishes between \emph{significant}  and \emph{spurious} local extrema that are purely associated with the noise term.
The set of all significant local extrema is the set of change point estimators with its cardinality an estimator for the number of the change points.

	Figure~\ref{figure_rp_example} shows an example illustrating these ideas: Away from the change points the \mosum\  statistic fluctuates  around 0 (within the white area that is beneath the threshold in absolute value) while it falls within the grey area close to the change points -- making corresponding local extrema significant. % It can further be seen at the third change that not every significant local maximum is a change point estimator if it is too close to another larger maximum. 
	Furthermore, the statistic does not need to return to the white area in order to have all changes estimated, as can be seen between the first and second change point. This is one of the major advantages of the $\eta$-criterion based on \emph{significant} local maxima as described here (in comparison to the $\epsilon$-criterion originally investigated by \cite{EichingerKirch} in the context of mean changes, see also the discussion in \cite{ChoKirchMeier}). Nevertheless, results for the $\epsilon$-criterion can be obtained along the lines of our proofs below.

 	\subsection{Threshold selection}\label{subsection_threshold}
	As pointed out above we need to choose a threshold $\thr=\thr_{h,T}$ that can distinguish between significant and spurious local extrema. The following theorem gives the magnitudes of signal as well as noise terms:
	\begin{Theorem}\label{theorem_threshold}\
		Let the Assumptions~\ref{invariance_principle}, \ref{bandwidth} and \ref{variance_estimator} hold.
	% (where $\|\cdot\|$ denotes the Euclidean matrix norm).
%	\begin{align*}
%		\limsupp{T\to\infty} \|A_T\|_F<\infty.
%	\end{align*}
	
	\begin{enumerate}[(a)]
		\item 	For the signal $\signal_t$ with $c_i-h<t<c_i+h$, it holds
						\begin{align*}
							\signal_t^{\prime}\widehat{\mathbf{A}}_t^{-1}\signal_t \ge \frac{1}{2\|\widehat{\mathbf{A}}_t\|}\frac{(h-\abs{t-c_i})^2}{h}\|\bd_i\|^2.
			\end{align*}
			At other time points the noise term is equal to zero.
			\item For the noise term it holds for $q_T=0$, i.e.\ in the no-change situation			\begin{enumerate}[(i)]
				\item for a linear bandwidth $h=\gamma T$ with $0<\gamma<1/2$ 			\begin{align*}
						&\sup_{\gamma T \le t\le T-\gamma t} 
	\noise_{t}^{\prime}\bSigma_{T}^{-1}\noise_{t}\\
	&\qquad\Verteilungskonvergenz \sup_{\gamma\le s\le 1-\gamma}\frac{1}{2\gamma}\left({\bB}_{s+\gamma}-2{\bB}_s+{\bB}_{s-\gamma}\right)^{\prime}\left({\bB}_{s+\gamma}-2{\bB}_s+{\bB}_{s-\gamma}\right), 
			\end{align*}
		where ${\bB} $ denotes a multivariate standard Wiener process.\\ 
			In particular, the squared noise term is of order $O_P(1)$ in this case.
	\item for a sublinear bandwidth $h/T\to 0$ but Assumption \ref{bandwidth} fulfilled, it holds that
	\begin{align*}
		a\left(\frac{T}{h}\right)\supp{h\le t\le T-h}\sqrt{\noise_t^{\prime}\bSigma_T^{-1}\noise_t }-b\left(\frac{T}{h}\right)\Verteilungskonvergenz E,
	\end{align*}
	where $ E $ follows a Gumbel distribution with $ \Prob{E\le x}=e^{-2e^{-x}} $ and 
	\begin{align*}
		a(x)=&\ \sqrt{2\log x}\\
		b(x)=&\ 2\log x+\frac{p}{2}\log\log x+\log\frac{3}{2}-\log\Gamma\left(\frac{p}{2}\right).
	\end{align*}
		In particular, the above squared noise term is of order $O_P\left( \log(T/h)  \right)$ in this case.
\end{enumerate}
The assertions remain true if an estimator for the covariance is used fulfilling \eqref{variance_estimator2} uniformly over all $h\le t\le T-h$.
\item In the situation of multiple change points, it holds that
\begin{align*}
	\sup_{h \le t\le T-h} \|\noise_t\|=&\ O_P(\sqrt{\log(T/h)}).
	%\max_{i=1,\ldots,q_T}\ \sup_{c_i-h<t\le c_i+h} \noise_t^{\prime}\boldsymbol{\Sigma_t}^{-1}\noise_t=&\ O_P(\log q_T)
\end{align*}
\end{enumerate}
\end{Theorem}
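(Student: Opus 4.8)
The plan is to reduce the multiple-change statement to the increments of the finitely many regime Wiener processes and then to control those increments uniformly. First I would use the strong invariance principle of Assumption~\ref{invariance_principle} to replace the centred process $\bRc$ by the Wiener process $\bW$ in the noise term \eqref{noise}: for each fixed $t$ the vector $\sqrt{2h}\,\noise_t$ is a combination, with coefficients bounded by $2$, of at most five evaluations of $\bRc$ at points among $t\pm h$, $t$ and the adjacent change point, so the uniform approximation error in $\sqrt{2h}\,\noise_t$ is $O_P(T^{1/2}\nu_T)$. Dividing by $\sqrt{2h}$ and invoking the bandwidth condition $\nu_T^2\,T\,\log T/h\to 0$ of Assumption~\ref{bandwidth} shows this error is $o_P(1)$, hence negligible against the claimed rate $\sqrt{\log(T/h)}$ (recall $h<T/2$, so $\log(T/h)\ge\log 2>0$).

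It then suffices to bound the Wiener-process noise $\noise_t(\bW)$. The key observation is that, by the isolation condition $h\le\tfrac12\,\cpdist_i$ in \eqref{condition_bandwidth}, each window of radius $h$ contains at most one change point, so in every one of the three cases of \eqref{noise} the vector $\sqrt{2h}\,\noise_t(\bW)$ is a linear combination, with coefficients bounded by $2$, of at most three increments $\bW_u^{(k)}-\bW_s^{(k)}$ of the regime processes over intervals $[s,u]$ of length $u-s\le h$ (in the interior case two increments of a single regime; in the boundary cases two increments of one adjacent regime and one of the other). Consequently
\begin{align*}
\sqrt{2h}\,\sup_{h\le t\le T-h}\|\noise_t(\bW)\|\le 3\,\max_{k=1,\ldots,P}\ \sup_{\substack{0\le s\le u\le T\\ u-s\le h}}\|\bW_u^{(k)}-\bW_s^{(k)}\|.
\end{align*}

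Next I would apply the maximal-increment (Csörgő–Révész type) bound for Wiener processes to each regime. Since every $\bW^{(k)}$ is a $p$-dimensional Wiener process whose covariance $\bSigma_T^{(k)}$ has eigenvalues bounded uniformly in $T$, one has $\sup_{u-s\le h}\|\bW_u^{(k)}-\bW_s^{(k)}\|=O_P(\sqrt{h\,\log(T/h)})$; this is exactly the order underlying the sharp Gumbel limit already obtained in part (b)(ii) for the no-change noise. Because $P<\infty$ is fixed, the maximum over $k=1,\ldots,P$ preserves this order, and dividing by $\sqrt{2h}$ yields $\sup_{h\le t\le T-h}\|\noise_t(\bW)\|=O_P(\sqrt{\log(T/h)})$; combining with the first step completes the argument.

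The main obstacle is the uniformity over a possibly unbounded number of change points: one cannot afford to bound each boundary contribution separately and sum, since $q_T$ may diverge. The resolution — and the crux of the proof — is that every increment appearing in the noise term, whether interior or at a boundary, is an increment over a window of length at most $h$ of one of the \emph{fixed} $P$ regime processes, so the entire supremum is dominated by the $P$ global increment-suprema above and the divergence of $q_T$ never enters. The only point requiring care is checking that the cross-regime terms at a change point $\cporig_i$, namely $\bRc_{\cporig_i}^{(\cporig_{i+1})}$ and $\bRc_{\cporig_i}^{(\cporig_i)}$, separate cleanly into a regime-$\cporig_{i+1}$ increment and a regime-$\cporig_i$ increment, which they do under the grouping indicated above.
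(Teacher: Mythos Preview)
Your argument for part (c) is correct but follows a different route from the paper. The paper splits the time axis into the $h$-neighbourhoods of the change points and the stationary stretches in between: near a change point it anchors all increments at $c_i$ and invokes Proposition~\ref{prop_WP_rates}(b)(i) (reflection principle plus invariance) to obtain the bound $O_P(\sqrt{\log 2q_T})$, while on each stationary stretch it falls back on the no-change result (b) applied to the single active regime, and then uses that there are only $P<\infty$ regimes and $q_T\le T/(2h)$ to combine both pieces into $O_P(\sqrt{\log(T/h)})$. You instead dominate \emph{every} occurrence of $\sqrt{2h}\,\noise_t(\bW)$, interior or boundary, by the single global quantity $\max_{k\le P}\sup_{u-s\le h}\|\bW_u^{(k)}-\bW_s^{(k)}\|$ and appeal to the Cs\"org\H{o}--R\'ev\'esz maximal-increment bound $O_P(\sqrt{h\log(T/h)})$.

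Your route is shorter and makes the irrelevance of $q_T$ completely transparent: the only maximum that survives is over the $P$ fixed regimes, so nothing needs to be said about how many times the process switches. The paper's route, by contrast, isolates the sharper intermediate rate $\sqrt{\log 2q_T}$ for the boundary contributions via Proposition~\ref{prop_WP_rates}; this is not needed for (c) itself, but the same anchored-increment machinery is reused in the proofs of Theorems~\ref{theorem_consistency} and~\ref{theorem_convergence_rate}, so the paper's decomposition is part of a toolkit it develops for later sections rather than the most economical proof of (c) in isolation.
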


 In order to obtain consistent estimators, on the one hand,
 the threshold needs to be asymptotically negligible compared to the squared signal term as in Theorem~\ref{theorem_threshold} (a). This guarantees that every change is detected with asymptotic probability 1. On the other hand, the threshold needs to grow faster than the squared noise term in Theorem~\ref{theorem_threshold} (c) so that  false positives occur with asymptotic probability 0. Hence, both conditions are fulfilled under the following assumption:

\begin{Assumption}\label{threshold} The threshold fulfills:
\begin{align*}
	\frac{\thr_{h,T}}{h_T\ \min\limits_{i=1,\ldots,q_T}\|\bd_i\|^2}&\ \to 0,\qquad
	\frac{\log\frac{T}{h_T}}{\thr_{h,T}}\ \to 0 \qquad (T\to \infty).
\end{align*}
\end{Assumption}

The following remark introduces a threshold that has a nice interpretation in connection with change point testing:
\begin{Remark}\label{rem_choice_threshold}
	A common choice for the threshold is obtained as the asymptotic $\alpha_T$-quantile obtained from Theorem~\ref{theorem_threshold} (b) for some sequence $\alpha_T\to 0$. In the simulation study in Section~\ref{subsection_simulation} we use this  threshold with $\alpha_T=0.05$.
Using the asymptotic $\alpha$-quantile (from the no-change situation) guarantees that all change point estimators are significant at a global level $\alpha$, where significant is meant in the usual testing sense. This gives this choice a nice interpretability.  In fact, Theorem~\ref{theorem_threshold}
shows that such a threshold with a constant sequence $\alpha$ yields an asymptotic test at level $\alpha$ which has asymptotic power one by Theorem~\ref{theorem_consistency}. Nevertheless, often, tests designed for the at-most-one-change as in \cite{HuskovaSteinebach00}, \cite{HuskovaSteinebach02} have a better power, but are not as good at localising change points (see Figure~1 in \cite{cho2020data} for an illustration).
\end{Remark}

\section{Consistency of the segmentation procedure}\label{section_consistency}
In this section, we will show consistency of the above segmentation procedure for both the estimators of the number and locations of the change points. 
Furthermore, we derive localisation rates for the estimators of the locations of the change points for some special cases showing that they cannot be improved in general. This is complemented by the observations that these localisation rates are indeed minimax-optimal if the number of change points is bounded in addition to observing Wiener processes with drift. Otherwise the generic rates that are obtained based solely on the invariance principle will not be tight in the sense that the proposed procedure can provide better rates than suggested by the invariance principle.

%\subsection{Consistency of the number and location of changes} \label{subsection_consistency}
%Further, it can be shown that with the $ \localmaxcriterion $-criterion, one obtains exactly one change point estimator in the interval $ \left[c_i-h,c_i+h\right] $ with asymptotic probability 1.\\
The following theorem shows that the change point estimators defined in \eqref{cp_estimator} are consistent for the number and locations of the change points.
\begin{Theorem}\label{theorem_consistency}
Let Assumptions \ref{invariance_principle}, \ref{bandwidth} -- \ref{threshold} hold.  Let $ 0<\hat{c}_1<\ldots<\hat{c}_{\hat{q}_T} $ be the change point estimators of type \eqref{cp_estimator}. Then for any $\tau>0$ it holds
\begin{align*}
	\lim_{T\to\infty}\Prob{\max_{i=1,\ldots,\min(\hat{q}_T,q_T)}\left|\hat{c}_i-\cporig_i\right|\le \tau h,
	\hat{q}_T=q_T}=\ 1.  
\end{align*}
\end{Theorem}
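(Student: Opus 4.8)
The plan is to exploit the decomposition $\bM_t=\signal_t+\noise_t$ from \eqref{signal}--\eqref{noise} and to pit the uniform lower bound on the signal near a change against the uniform upper bound on the noise. Two ingredients carry the whole argument. First, Theorem~\ref{theorem_threshold}(c) supplies the uniform noise bound $N_T:=\sup_{h\le t\le T-h}\|\noise_t\|=O_P(\sqrt{\log(T/h)})$. Second, Assumption~\ref{variance_estimator} gives a constant $C$ with $\sup_t\|\widehat{\mathbf{A}}_t^{-1}\|\le C$ and $\sup_i\sup_{\abs{t-\cporig_i}\le h}\|\widehat{\mathbf{A}}_t\|\le C$ on an event of probability tending to one, so that $\tfrac{1}{C}\|\bM_t\|^2\le\bM_t'\widehat{\mathbf{A}}_t^{-1}\bM_t\le C\|\bM_t\|^2$ there; I would use this to pass freely between the quadratic form governing \emph{significance} \eqref{eq_mosum_signif} and the Euclidean norm governing the local-maximum criterion \eqref{cp_estimator}. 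Throughout I would use that, by \eqref{signal}, $\sqrt{2h}\,\|\signal_t\|=(h-\abs{t-\cporig_i})\|\bd_i\|$ for $\abs{t-\cporig_i}\le h$ (a triangle peaked at $\cporig_i$ in the fixed direction $\bd_i$) while $\signal_t=0$ once $t$ is at distance $\ge h$ from every change, and that by \eqref{condition_bandwidth} the intervals $(\cporig_i-h,\cporig_i+h)$ are pairwise disjoint. Since the claim for small $\tau$ implies it for larger $\tau$, I would fix $\tau\in(0,\min(\eta/2,1-\eta))$.

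First I would show that nothing is detected away from the changes while every change is detected. If $t$ has distance $\ge h$ from all $\cporig_i$ then $\bM_t=\noise_t$ and $\bM_t'\widehat{\mathbf{A}}_t^{-1}\bM_t\le C N_T^2=O_P(\log(T/h))$, which by the second condition in Assumption~\ref{threshold} is eventually below $\thr$; hence every significant point, and in particular every estimator, lies in some $(\cporig_i-h,\cporig_i+h)$. Conversely, at the peak $\|\bM_{\cporig_i}\|\ge\sqrt{h/2}\,\|\bd_i\|-N_T$, and since $h\min_i\|\bd_i\|^2/\log(T/h)\to\infty$ by \eqref{condition_mean_change} this is of order $\sqrt h\,\|\bd_i\|$ uniformly in $i$; with $\bM_{\cporig_i}'\widehat{\mathbf{A}}_{\cporig_i}^{-1}\bM_{\cporig_i}\ge\tfrac{1}{C}\|\bM_{\cporig_i}\|^2$ and the first condition in Assumption~\ref{threshold} ($\thr=o(h\min_i\|\bd_i\|^2)$) it exceeds $\thr$ with probability tending to one, so each $\cporig_i$ is significant.

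The crux is the localisation: I would argue that any significant local maximiser $t^\ast$ obeys $\abs{t^\ast-\cporig_i}\le\tau h$ for its neighbouring change. Writing $s=\abs{t^\ast-\cporig_i}<h$, in the case $s\le\eta h$ the point $\cporig_i$ lies in the window $[t^\ast-\eta h,t^\ast+\eta h]$, so $\|\bM_{t^\ast}\|\ge\|\bM_{\cporig_i}\|$; combining $\|\bM_{t^\ast}\|\le\tfrac{h-s}{\sqrt{2h}}\|\bd_i\|+N_T$ with $\|\bM_{\cporig_i}\|\ge\tfrac{h}{\sqrt{2h}}\|\bd_i\|-N_T$ yields $s\le 2\sqrt{2h}\,N_T/\|\bd_i\|$, so $s/h=O_P(\sqrt{\log(T/h)/(h\min_i\|\bd_i\|^2)})=o_P(1)$ uniformly, hence $s\le\tau h$ eventually. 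The case $s>\eta h$ I would exclude: stepping a distance $\eta h$ from $t^\ast$ towards $\cporig_i$ to a point $t'$ in the window raises the signal by $\eta\sqrt{h/2}\,\|\bd_i\|$, so $\|\bM_{t'}\|-\|\bM_{t^\ast}\|\ge\eta\sqrt{h/2}\,\|\bd_i\|-2N_T>0$ with probability tending to one, contradicting maximality of $t^\ast$ in its window. The same two estimates show the maximiser of $\|\bM_t\|$ over $(\cporig_i-h,\cporig_i+h)$ sits within $\tau h$ of $\cporig_i$ (so its $\eta h$-window stays inside that interval because $\tau<1-\eta$) and is significant, hence a genuine significant local maximiser. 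I expect this $s>\eta h$ exclusion, performed uniformly over the possibly unbounded number of changes, to be the main obstacle, as it is exactly where the local-maximum criterion and the monotone signal flanks are needed.

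Finally I would count. The previous step attaches to each $\cporig_i$ at least one significant local maximiser within $\tau h$, and forces every significant local maximiser within $\tau h$ of some $\cporig_i$. Two distinct such maximisers attached to the same $\cporig_i$ would lie within $2\tau h<\eta h$ of one another, hence each in the other's window with equal $\|\bM\|$, contradicting the $\min$-convention in \eqref{cp_estimator}; so there is exactly one per change. Thus $\hat q_T=q_T$ on an event of probability tending to one, the increasing matching identifies $\hat c_i$ with $\cporig_i$ (order being preserved by the $2h$-separation and the $\tau h<h$ localisation), and $\max_i\abs{\hat c_i-\cporig_i}\le\tau h$, as claimed.
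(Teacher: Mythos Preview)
Your argument is correct and follows the same overall strategy as the paper: use Theorem~\ref{theorem_threshold}(c) to bound the noise uniformly, use the triangular signal shape \eqref{signal} to control significance near and away from changes, and then combine these to localise and count. The only notable difference is in the localisation step. The paper packages this into explicit events $S_T^{(3)}(j,\tau)$, $S_T^{(4)}(j,\tau)$ that encode a \emph{discretised} monotonicity of $\|\bM_t\|$ on each flank (comparing the supremum over $[c_j-h,c_j-k\tau h]$ with the value at $c_j-(k-1)\tau h$ for $k=1,\dots,\lceil 1/\tau\rceil-1$, requiring $\tau<\eta/2$), whereas you replace this grid by a two-case argument: if $|t^\ast-c_i|\le\eta h$ compare directly to $c_i$, otherwise compare to the single point $t^\ast\pm\eta h$ one window-width closer to $c_i$. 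Both routes exploit exactly the same inequality $\|\signal_{t_2}\|-\|\signal_{t_1}\|=\tfrac{|t_1-t_2|}{\sqrt{2h}}\|\bd_i\|\gg N_T$ for points on the same flank; your version is slightly more economical (two comparisons instead of $\lceil 1/\tau\rceil$) at the cost of the extra harmless constraint $\tau<1-\eta$ to keep the global maximiser's window inside $(c_i-h,c_i+h)$.
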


The theorem shows in particular that the number of change points is estimated consistently. For the linear bandwidth we additionally get consistency of the change point locations in rescaled time, while for the sublinear bandwidths we already get a convergence rate of $h/T$ towards the rescaled change points.

%\subsection{Rates of convergence} \label{subsection_rates}
%The following result is obtained under the assumption of the time-dependent variance (estimators) only to define significant points, but define the local maxima with respect to $ \bM^{\prime}\bM/2h $ (compare \eqref{cp_estimator})\footnote{vermutlich wäre eine feste Kovarianzmatrix $\Sigma_T$ nicht von $t$ abhängend oder sogar ein schätzer auch noch okay, wenn eigenwerte geeignet beschränkt sind? Darauf achten und ggf. in der Art minimal allgemeiner formulieren}. In simulations it has shown\footnote{to do and possibly reference}.
%For a univariate mean change situation with dependent errors \cite{EichingerKirch} made the same alteration to the procedure (see Theorem 3.2), for the original procedure they could derive corresponding rates only in the i.i.d.\ situation (see Corollary~3.1) for moving sample variances. 

Under the following stronger assumptions, the localisation rates can be further improved:

\begin{Assumption}\label{Hajek_Renyi}\begin{enumerate}[(a)]
			\item It holds  for any of the centered processes $\bRc^{(j)}$ as in \eqref{noise} and any value $\theta_i=\theta_{i,T}$ (which will be $c_i$ or $c_i\pm h$ when the assumption is applied) for any sequence $D_T\ge 1$ (bounded or unbounded)
	\begin{align*}
		\sup_{\frac{D_T}{\|\bd_i\|^2}\le s \le h} \frac{\sqrt{D_T} \left\|\bRc_{\theta_i}^{(j)}-\bRc_{\theta_i\pm s}^{(j)}\right\|}{s\left\|\bd_i\right\|}=O_P(\omega_T). %\qquad \sup_{\frac{B_T}{\|\bd_i\|^2}\le s \le h} \frac{B_T\left\|\bRc_{c_i\pm h}^{(j)}-\bRc_{c_i\pm h\pm s}^{(j)}\right\|}{s\left\|\bd_i\right\|}=O_P(\omega_n).
	\end{align*}
\item Let now the upper index $\theta_i$ denote the active stretch in the stationary segment $(\theta_i,\theta_i+s)$ respectively $(\theta_i-s,\theta_i)$.   Then, it holds for any sequence $D_T>0$
	\begin{align*}
		\max_{i=1,\ldots,q_T}	\sup_{\frac{D_T}{\|\bd_i\|^2}\le s \le h} \frac{\sqrt{D_T} \left\|\bRc_{\theta_i}^{(\theta_i)}-\bRc_{\theta_i\pm s}^{(\theta_i)}\right\|}{s\left\|\bd_i\right\|}=O_P(\tilde{\omega}_T). %\qquad \sup_{\frac{B_T}{\|\bd_i\|^2}\le s \le h} \frac{B_T\left\|\bRc_{c_i\pm h}^{(j)}-\bRc_{c_i\pm h\pm s}^{(j)}\right\|}{s\left\|\bd_i\right\|}=O_P(\omega_n).
	\end{align*}
	\end{enumerate}
\end{Assumption}
The localisation rates of the \mosum\  procedure are determined by the  rates $\omega_n,\tilde{\omega}_n$ which need to be derived for each example separately (at least for the tight ones). In the context of partial sum processes these results are well known. For example, the suprema in (a) are stochastically bounded by the H\'aj\'ek-R\'enyi inequality which has been shown for partial sum processes even with weakly dependent errors. In that context, the assertion in (b) is fulfilled with a polynomial rate in $q_T$ (see \cite{ChoKirch}, Proposition 2.1 (c)(ii)).

\begin{Remark}\label{rem_rates}
	\begin{enumerate}[(a)]
		\item  For Wiener processes with drift $\omega_T=1$ and $\tilde{\omega}_T=\sqrt{\log(q_T)}$ (see Proposition~\ref{prop_WP_rates} below).	
		\item 	By the invariance principle in Assumption~\ref{invariance_principle}, all rates are clearly dominated by $T^{1/2}\nu_T$.			However, this is often far too liberal a bound (see Proposition 2.1 in \cite{ChoKirch} for some tight bounds in case of partial sum processes). 
		\item Often, for each regime there exist forward and backwards invariance principles from some arbitrary starting value $\theta_i$. This is the case for partial sum processes and for (backward and forward) Markov processes due to the Markov property. For renewal processes, this can be shown along the lines of the original proof for the invariance principle (\cite{CsorgoHorvathSteinebach}) because the time to the next (previous) event is asymptotically negligible; see also Example 1.2 in \cite{KuehnSteinebach}).
			In this case, the H\'aj\'ek-R\'enyi results for Wiener processes carry over (see Proposition~\ref{prop_WP_rates}) to the different processes underlying each regime, resulting in $\omega_T=1$. For the situation with a bounded number of change points this carries over to $\tilde{\omega}_T$.
	\end{enumerate}
\end{Remark}

\begin{Theorem}\label{theorem_convergence_rate}\ \\
	Let Assumptions \ref{invariance_principle}, \ref{bandwidth} -- \ref{threshold} in addition to \ref{Hajek_Renyi} hold. % Let $ \bd_i $ fulfill \eqref{condition_mean_change}.	
	For $\hat{q}_T<q_T$ define $\hat{c}_i=T$ for $i=\hat{q}_T+1,\ldots,q_T$.
\begin{enumerate}[(a)]
	\item
	For a single change point estimator the following localisation rate holds 		\[ 
		\left\|\bd_i\right\|^2\left|\hat{c}_{i}-c_i\right|	=O_P\left(\omega_T^2\right).
	\]
\item The following uniform rate holds true:
	\[ 
		\max_{i=1,\ldots,q_T}\left\|\bd_i\right\|^2\left|\hat{c}_{i}-c_i\right|	=O_P\left(\tilde{\omega}_T^2\right).
	\]
\end{enumerate}

\end{Theorem}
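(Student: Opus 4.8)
The plan is to localise the problem using the consistency result of Theorem~\ref{theorem_consistency}, and then to exploit the fact that each $\hat{c}_i$ is a local maximiser of $\|\bM_t\|$ together with the exact shape of the signal in \eqref{signal} and the H\'aj\'ek--R\'enyi control of Assumption~\ref{Hajek_Renyi}. First I would fix $\tau<\min(\eta,1)$ and work on the event, of asymptotic probability one by Theorem~\ref{theorem_consistency}, on which $\hat{q}_T=q_T$ and $|\hat{c}_i-c_i|\le\tau h$ for every $i$, so that estimators and true change points are in one-to-one correspondence and each $\hat{c}_i$ lies in the linear part $(c_i-h,c_i+h)$ of the signal (no contamination from neighbouring changes by \eqref{condition_bandwidth}). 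Since $c_i$ then lies in the $\eta h$-window of $\hat{c}_i$, the defining property \eqref{cp_estimator} gives $\|\bM_{\hat{c}_i}\|\ge\|\bM_{c_i}\|$. Writing $\bM_t=\signal_t+\noise_t$ and $\delta_i=|\hat{c}_i-c_i|$, expanding $\|\bM_{\hat{c}_i}\|^2-\|\bM_{c_i}\|^2\ge0$ and rearranging yields
\[
\|\signal_{c_i}\|^2-\|\signal_{\hat{c}_i}\|^2 \le 2\bigl(\signal_{\hat{c}_i}^{\prime}\noise_{\hat{c}_i}-\signal_{c_i}^{\prime}\noise_{c_i}\bigr)+\|\noise_{\hat{c}_i}\|^2-\|\noise_{c_i}\|^2 .
\]
By \eqref{signal} the signal equals $\frac{h-|t-c_i|}{\sqrt{2h}}\bd_i$ near $c_i$, so the left-hand side is $\frac{\|\bd_i\|^2}{2h}\delta_i(2h-\delta_i)\ge\frac12\|\bd_i\|^2\delta_i$.

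Second I would bound the right-hand side. Using $\signal_t=\frac{h-|t-c_i|}{\sqrt{2h}}\bd_i$, the cross-term difference equals $\frac{1}{\sqrt{2h}}\bigl[h\,\bd_i^{\prime}(\noise_{\hat{c}_i}-\noise_{c_i})-\delta_i\,\bd_i^{\prime}\noise_{\hat{c}_i}\bigr]$. The stray term $\frac{\delta_i}{\sqrt{2h}}\bd_i^{\prime}\noise_{\hat{c}_i}$ and the quadratic noise difference are controlled by $\|\noise_{\hat{c}_i}\|,\|\noise_{c_i}\|=O_P(\sqrt{\log(T/h)})$ from Theorem~\ref{theorem_threshold}(c); since \eqref{condition_mean_change} gives $\sqrt{\log(T/h)/h}=o(\|\bd_i\|)$, both are $o_P(\|\bd_i\|^2\delta_i)$ and may be absorbed into the left-hand side. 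The essential contribution is $\sqrt{h/2}\,\bd_i^{\prime}(\noise_{\hat{c}_i}-\noise_{c_i})$, and from the explicit form of $\noise_t$ in \eqref{noise} the constant regime terms cancel, leaving
\[
\sqrt{2h}\,(\noise_{\hat{c}_i}-\noise_{c_i})=\bigl(\bRc_{\hat{c}_i+h}-\bRc_{c_i+h}\bigr)-2\bigl(\bRc_{\hat{c}_i}-\bRc_{c_i}\bigr)+\bigl(\bRc_{\hat{c}_i-h}-\bRc_{c_i-h}\bigr),
\]
where each $\bRc$ carries the regime superscript dictated by \eqref{noise}; these are three increments of the centered processes over intervals of length $\delta_i$ anchored at $c_i+h$, $c_i$ and $c_i-h$ (the case $\hat{c}_i<c_i$ is symmetric).

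The heart of the argument is a self-improving step. Assume, towards a contradiction, that $\|\bd_i\|^2\delta_i\ge D_T$ for a sequence $D_T\to\infty$ to be chosen. Then the anchors together with $s=\delta_i$ fall in the admissible range $\frac{D_T}{\|\bd_i\|^2}\le s\le h$ of Assumption~\ref{Hajek_Renyi}(a), so each increment is $O_P\bigl(\delta_i\|\bd_i\|\omega_T/\sqrt{D_T}\bigr)$, whence $\sqrt{h/2}\,\|\bd_i\|\,\|\noise_{\hat{c}_i}-\noise_{c_i}\|=O_P\bigl(\delta_i\|\bd_i\|^2\omega_T/\sqrt{D_T}\bigr)$. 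Dividing the whole inequality by $\|\bd_i\|^2\delta_i$ gives $\tfrac12\le O_P(\omega_T/\sqrt{D_T})+o_P(1)$. Choosing $D_T=M\omega_T^2$ with $M$ large makes the right-hand side smaller than $\tfrac12$ with probability approaching one, which is impossible; hence $\Prob{\|\bd_i\|^2\delta_i\ge M\omega_T^2}\to0$ as $M\to\infty$, i.e.\ $\|\bd_i\|^2|\hat{c}_i-c_i|=O_P(\omega_T^2)$, proving~(a). For~(b) I would run the same chain of inequalities with the uniform bound of Assumption~\ref{Hajek_Renyi}(b), whose supremum is already taken over $i=1,\dots,q_T$, and with $D_T=M\tilde{\omega}_T^2$, to obtain the uniform rate.

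The main obstacle is making the self-improving contradiction rigorous. One must ensure the admissible H\'aj\'ek--R\'enyi range is non-empty, which needs $\omega_T^2=o(\|\bd_i\|^2 h)$ (a consequence of \eqref{condition_mean_change}), and, for part~(b), that all the $O_P$ and $o_P$ bounds hold \emph{simultaneously} across the possibly unbounded collection of change points; this is precisely where the uniform formulation in Assumption~\ref{Hajek_Renyi}(b) and the boundedness in Assumption~\ref{variance_estimator} enter. A secondary care point is confirming that replacing the Euclidean norm used in \eqref{cp_estimator} by the weighted quadratic form only affects the comparison through the bounded factors of Assumption~\ref{variance_estimator}, so that the argument is unchanged up to constants.
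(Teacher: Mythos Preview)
Your overall strategy coincides with the paper's: work on the consistency event, use that $\|\bM_{\hat c_i}\|^2-\|\bM_{c_i}\|^2\ge 0$, decompose into signal and noise, and control the short increments $\bRc_{\theta_i\pm s}-\bRc_{\theta_i}$ (anchored at $c_i$ and $c_i\pm h$) via Assumption~\ref{Hajek_Renyi}. The paper packages the same computation slightly differently, writing
\[
2h\bigl(\|\bM_t\|^2-\|\bM_{c_i}\|^2\bigr)=-\bigl(D_{1,t}\,\bd_i+\mathbf{N}_{1,t}\bigr)^{\prime}\bigl(D_{2,t}\,\bd_i+\mathbf{N}_{2,t}\bigr)
\]
with $D_{1,t}=c_i-t$, $D_{2,t}=2h-(c_i-t)$, and $\mathbf{N}_{1,t}$, $\mathbf{N}_{2,t}$ precisely your $\sqrt{2h}(\noise_{c_i}-\noise_t)$ and $\sqrt{2h}(\noise_{c_i}+\noise_t)$ up to sign. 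It then bounds $\|\mathbf{N}_{1,t}\|/(D_{1,t}\|\bd_i\|)$ by H\'aj\'ek--R\'enyi and $\|\mathbf{N}_{2,t}\|/(D_{2,t}\|\bd_i\|)=o_P(1)$ by Proposition~\ref{prop_WP_rates}(b)(i); since every cross and quadratic term is a product containing the factor $\|\mathbf{N}_{1,t}\|/(D_{1,t}\|\bd_i\|)$, each is automatically $O_P(1/\sqrt{C})$ on the range $c_i-t\ge C\omega_T^2/\|\bd_i\|^2$.

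There is, however, a genuine gap in your treatment of the quadratic noise difference. You bound $\|\noise_{\hat c_i}\|^2-\|\noise_{c_i}\|^2$ using only $\|\noise_t\|=O_P(\sqrt{\log(T/h)})$, which gives $O_P(\log(T/h))$; dividing by $\|\bd_i\|^2\delta_i\ge M\omega_T^2$ yields $O_P\bigl(\log(T/h)/(M\omega_T^2)\bigr)$, and for sublinear bandwidths with $\omega_T=O(1)$ (e.g.\ Wiener processes with drift, Remark~\ref{rem_rates}(a)) this term diverges for every fixed $M$, so it \emph{cannot} be absorbed as claimed. The fix is immediate and is exactly what the paper's factorisation does implicitly: write
\[
\|\noise_{\hat c_i}\|^2-\|\noise_{c_i}\|^2=(\noise_{\hat c_i}-\noise_{c_i})^{\prime}(\noise_{\hat c_i}+\noise_{c_i}),
\]
bound the first factor by your H\'aj\'ek--R\'enyi increment estimate $\|\noise_{\hat c_i}-\noise_{c_i}\|=O_P\bigl(\delta_i\|\bd_i\|\omega_T/(\sqrt{D_T}\sqrt{h})\bigr)$ and the second by $O_P(\sqrt{\log(T/h)})$; using $\sqrt{\log(T/h)/h}=o(\|\bd_i\|)$ from \eqref{condition_mean_change} this becomes $o_P\bigl(\delta_i\|\bd_i\|^2\omega_T/\sqrt{D_T}\bigr)$, of the same order as your main term. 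With this correction your argument goes through and is essentially a reparametrisation of the paper's proof.
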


\begin{Remark}[Minimax optimality]\label{Remark_minimax}
	We have already mentioned beneath \eqref{eq_separation} that the separation rate given there is minimax optimal (see Proposition 1 of \cite{AriasCastro}).

	Minimax optimal localisation rates (derived in the context of changes in the mean of univariate time series, which is  covered by the partial sum processes in our framework) are known for a few special cases: First, the minimax optimal localisation rate for a single change point and in extension also for a bounded number of change points is given by $\omega_T=1$ in the above notation (see e.g.\ Lemma 2  in \cite{wang2020univariate}).  In particular this shows that our procedures achieves the minimax optimality in case of a bounded number of change points under weak assumptions (as pointed out in Remark~\ref{rem_rates} (c)). 
Secondly, the optimal localisation rate for unbounded change points under sub-Gaussianity (attained for partial sum process of i.i.d.\ errors) is given by $\tilde{\omega_T}=\sqrt{\log T}$ (see Proposition 6 in \cite{Verzelenetal} and Proposition 2.3 in \cite{ChoKirch}). Indeed, we match this rate for Wiener processes with drift.
\end{Remark}

The following theorem derives the limit distribution of the change point estimators for local changes which shows in particular that the rates are tight. In principle, this result can be used to obtain asymptotically valid confidence intervals for the change point locations. In case of fixed changes, the limit distribution depends on the underlying distribution of the original process (see \cite{AntochHuskova_fixed}
 for the case of partial sum processes), where the proof can be done along the same lines.
We need the following assumption:

\begin{Assumption}\label{ass_cp_distr}
	Let $\bd_i=\bd_{i,T}=\|\bd_i\| \mathbf{u}_i + o(\|\bd_i\|)$ with $\|\mathbf{u}_i\|=1$ and $\|\bd_{i,T}\|\to 0$.
Assume that	 $\bY^{(j)}_s=\bY^{(j)}_s(c_i,D)$ with
	\begin{align*}
		&		\bY^{(1)}_s=\bRc^{(c_i)}_{c_i-h+\frac{s-D}{\|d_i\|^2}}-\bRc^{(c_i)}_{c_i-h-\frac{D}{\|d_i\|^2}},\\
&		\bY^{(21)}_s=\bRc^{(c_i)}_{c_i+\frac{s-D}{\|d_i\|^2}}-\bRc^{(c_i)}_{c_i-\frac{D}{\|d_i\|^2}},\qquad	\bY^{(22)}_s=\bRc^{(c_{i+1})}_{c_i+\frac{s-D}{\|d_i\|^2}}-\bRc^{(c_{i+1})}_{c_i-\frac{D}{\|d_i\|^2}},\\
&		\bY^{(3)}_s=\bRc^{(c_{i+1})}_{c_i+h+\frac{s-D}{\|d_i\|^2}}-\bRc^{(c_{i+1})}_{c_i+h-\frac{D}{\|d_i\|^2}}
	\end{align*}
	fulfill the following multivariate functional central limit theorem for any constant $D>0$ in an appropriate space equipped with the supremum norm
	\begin{align*}
		\left\{\|\bd_i\|\,(\bY^{(1)}_s,\bY^{(21)}_s,\bY^{(22)}_s,\bY^{(3)}_s)^{\prime}:0\le s\le 2D\right\}\overset{w}{\longrightarrow}\left\{\widetilde{\bW}_s:0\le s\le 2D\right\},
%	\\	&\qquad \overset{w}{\longrightarrow}\left\{\bW_s=(\bW^{(1)}_s,\bW^{(21)}_s,\bW^{(22)}_s,\bW^{(3)}_s)^{\prime}:0\le s\le 22\right\},
	\end{align*}
	where $\widetilde{\bW}$ is a Wiener process with covariance matrix $\boldsymbol{\Xi}$ (not depending on $D$). For $-D\le t\le D$ denote $\bW_t=(\bW^{(1)}_t,\bW^{(21)}_t,\bW^{(22)}_t,\bW^{(3)}_t)^{\prime}=\widetilde{\bW}_{D+t}-\widetilde{\bW}_D$.
\end{Assumption}

By Assumption~\ref{bandwidth} it holds $h\|\bd_i\|^2\to \infty$, such that the distance $h-\frac{2D}{\|d_i\|^2}$ between $\bY^{(1)}$ and $\bY^{(2j)}$ (resp.\ between $\bY^{(2j)}$ and $\bY^{(3)}$) diverges to infinity. 
As such for processes with independent increments the processes $\bY^{(1)}$, $(\bY^{(21)},\bY^{(22)})^{\prime}$, $\bY^{(3)}$  are independent for $T$ large enough. 
Additionally, under weak assumptions such as mixing conditions this independence still holds asymptotically in the sense that $\bW^{(1)}$, $(\bW^{(21)},\bW^{(22)})^{\prime}$, $\bW^{(3)}$  are independent. 

Functional central limit theorems for these processes follow from invariance principles as in Assumption~\ref{invariance_principle} with $\bSigma_T\to \bSigma$  as long as such invariance principles still hold with an arbitrary (moving) starting value, which is typically the case (see also Remark~\ref{rem_rates} (c)). As such, it typically holds that $\boldsymbol{\Xi}^{(1)}=\boldsymbol{\Xi}^{(21)}=\bSigma^{(c_i)}$ and $\boldsymbol{\Xi}^{(3)}=\boldsymbol{\Xi}^{(22)}=\bSigma^{(c_{i+1})}$
where $\boldsymbol{\Xi}^{j}=\operatorname{Cov}(\bW^{(j)}_1)$ and $\bSigma^{(c_i)}$ is the covariance matrix associated with the regime between the $(i-1)$st and $i$th change point.

The following theorem gives the asymptotic distribution for the change point estimators in case of local change points.
\begin{Theorem}\label{limit_distribution}\ \\
	Let Assumptions \ref{invariance_principle}, \ref{bandwidth} -- \ref{threshold}, \ref{Hajek_Renyi} (a) with $\omega_T=1$  and \ref{ass_cp_distr} hold. For $\hat{q}_T<q_T$ define $\hat{c}_i=T$ for $i=\hat{q}_T+1,\ldots,q_T$.
	Let 	
	\begin{align*}
	\Psi_t^{(i)}:=-\abs{t}+\begin{cases}
		\mathbf{u}_i^{\prime}\bW_{t}^{(1)}-2\,\mathbf{u}_i^{\prime}\bW_{t}^{(21)}+ \mathbf{u}_i^{\prime}\bW_t^{(3)},&\ t<0\\	
		\mathbf{u}_i^{\prime} \bW_{t}^{(1)}-2\,\mathbf{u}_i^{\prime}\bW_{t}^{(22)}+\mathbf{u}_i^{\prime}\bW_t^{(3)},&\ t\ge 0.
	\end{cases}
	\end{align*}
	Then, for all $ i=1,\ldots,q_T, $ it holds that for $T\to\infty$
		\[ 
\left\|\bd_i\right\|^2\left(\hat{c}_{i}-c_i\right)\Verteilungskonvergenz \argmax\geschweift{\Psi_t^{(i)}\Big| t\in\R}
	\]

	If there is a fixed number of changes $q_T=q$ with $q$ fixed and a functional central limit theorem as in Assumption~\ref{ass_cp_distr} holds jointly for all $q$ change points, then the result also holds jointly.
\end{Theorem}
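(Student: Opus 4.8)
The plan is to zoom into a shrinking neighbourhood of $c_i$ at the natural resolution $\|\bd_i\|^{-2}$, to identify the weak limit of the suitably recentred rescaled objective function, and then to read off the limit of its maximiser by an argmax continuous mapping argument. Since Theorem~\ref{theorem_convergence_rate}(a) (with $\omega_T=1$) already gives $\widehat{c}_i-c_i=O_P(\|\bd_i\|^{-2})$, I would substitute $t=c_i+u\|\bd_i\|^{-2}$ and study
\[
f_T(u):=\|\bM_{c_i+u\|\bd_i\|^{-2}}\|^2-\|\bM_{c_i}\|^2,
\]
observing that $\widehat{c}_i$ maximises $\|\bM_t\|$, equivalently $\|\bM_t\|^2$, over its local environment by \eqref{cp_estimator}, so that $\|\bd_i\|^2(\widehat{c}_i-c_i)$ is an argmax of $f_T$, and that both recentring by $\|\bM_{c_i}\|^2$ and rescaling leave the location of the maximiser unchanged.

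Using $\bM_t=\signal_t+\noise_t$ I would expand
\[
f_T(u)=\big(\|\signal_t\|^2-\|\signal_{c_i}\|^2\big)+2\big(\signal_t^{\prime}\noise_t-\signal_{c_i}^{\prime}\noise_{c_i}\big)+\big(\|\noise_t\|^2-\|\noise_{c_i}\|^2\big).
\]
From \eqref{signal} one has $\sqrt{2h}\,\signal_t=(h-\abs{t-c_i})\bd_i$ for $\abs{t-c_i}\le h$, hence $\|\signal_t\|^2-\|\signal_{c_i}\|^2=-\abs{u}+O(\|\bd_i\|^{-2}h^{-1})$, which produces the deterministic drift $-\abs{t}$ of $\Psi^{(i)}$. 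Writing $\noise_t=(2h)^{-1/2}\widetilde N_t$ with $\widetilde N_t=\sqrt{2h}\,\noise_t$ as in \eqref{noise}, the cross term equals $(1-\abs{u}(h\|\bd_i\|^2)^{-1})\,\bd_i^{\prime}\widetilde N_t-\bd_i^{\prime}\widetilde N_{c_i}$; since $h\|\bd_i\|^2\to\infty$ by Assumption~\ref{bandwidth} and $\bd_i=\|\bd_i\|\bu_i+o(\|\bd_i\|)$, this reduces to $\|\bd_i\|\,\bu_i^{\prime}(\widetilde N_t-\widetilde N_{c_i})$ up to $o_P(1)$ uniformly on compact $u$-sets. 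The final term $\|\noise_t\|^2-\|\noise_{c_i}\|^2=(2h)^{-1}(\widetilde N_t-\widetilde N_{c_i})^{\prime}(\widetilde N_t+\widetilde N_{c_i})$ is $O_P((\sqrt h\,\|\bd_i\|)^{-1})=o_P(1)$, using $\|\widetilde N_t-\widetilde N_{c_i}\|=O_P(\|\bd_i\|^{-1})$ and $\|\widetilde N_t+\widetilde N_{c_i}\|=O_P(\sqrt h)$.

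It then remains to identify the weak limit of $\|\bd_i\|(\widetilde N_t-\widetilde N_{c_i})$ as a process in $u$. Plugging $t=c_i+u\|\bd_i\|^{-2}$ into the explicit expression \eqref{noise} and cancelling the $t$-independent terms, the difference is a signed sum of three increments of the centred regime processes taken near $c_i-h$ (regime $c_i$), near $c_i$ (regime $c_i$ if $u<0$, regime $c_{i+1}$ if $u\ge 0$) and near $c_i+h$ (regime $c_{i+1}$), with coefficients $+1,-2,+1$ inherited from the moving-sum structure $\bZ_{t+h}-2\bZ_t+\bZ_{t-h}$. By Assumption~\ref{ass_cp_distr} these three increments, scaled by $\|\bd_i\|$, converge jointly to $\bW^{(1)}_u$, to $\bW^{(21)}_u$ respectively $\bW^{(22)}_u$, and to $\bW^{(3)}_u$, while continuity of $\widetilde N_t$ at $c_i$ (the two one-sided formulas in \eqref{noise} coincide there) makes the two representations agree at $u=0$. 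Taking the inner product with $\bu_i$ and adding the drift yields $f_T\overset{w}{\longrightarrow}\Psi^{(i)}$ on every compact interval, with $\Psi^{(i)}$ exactly as stated.

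Finally I would pass from this functional convergence to convergence of the maximiser by an argmax continuous mapping theorem. For this I would verify that $\Psi^{(i)}$ has an almost surely unique maximiser attained at a finite location, which holds since its paths are continuous and $\Psi^{(i)}_t\to-\infty$ as $\abs{t}\to\infty$ because of the $-\abs{t}$ drift dominating the $O(\sqrt{\abs{t}})$ Gaussian fluctuations, and that $\widehat u_i:=\|\bd_i\|^2(\widehat{c}_i-c_i)=O_P(1)$ by Theorem~\ref{theorem_convergence_rate}(a). On the event $\{\abs{\widehat u_i}\le D\}$, since $\eta h\gg D\|\bd_i\|^{-2}$ the window $[c_i-D\|\bd_i\|^{-2},c_i+D\|\bd_i\|^{-2}]$ lies inside the $\eta h$-environment over which $\widehat{c}_i$ is a local maximiser by \eqref{cp_estimator}, so $\widehat u_i=\argmax_{\abs{u}\le D}f_T(u)$; the argmax theorem then gives $\widehat u_i\Verteilungskonvergenz\argmax_{\abs{u}\le D}\Psi^{(i)}_u$, and letting $D\to\infty$ produces the stated limit. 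For a fixed number $q$ of changes the local windows around distinct change points are eventually disjoint, so the associated processes, and hence their argmaxes, converge jointly once the functional central limit theorem of Assumption~\ref{ass_cp_distr} is assumed to hold jointly. I expect the main obstacle to be precisely this last step, namely rigorously reducing the data-driven local maximiser of \eqref{cp_estimator} to the argmax over the rescaled compact window and verifying the uniqueness and tightness hypotheses of the argmax continuous mapping theorem; the genuinely probabilistic input is already furnished by Assumption~\ref{ass_cp_distr}, leaving the matching of the noise increments to $\bW^{(1)},\bW^{(21)},\bW^{(22)},\bW^{(3)}$ and the uniform control of the negligible remainders as careful but routine bookkeeping.
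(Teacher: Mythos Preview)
Your proposal is correct and follows essentially the same route as the paper. The paper recycles the bilinear decomposition $\mathbf{V}_t=-\frac{1}{2h}(\signalsquareterms_{1,t}\bd_i+\diffusionsquareterms_{1,t})^{\prime}(\signalsquareterms_{2,t}\bd_i+\diffusionsquareterms_{2,t})$ from the proof of Theorem~\ref{theorem_convergence_rate}, whereas you expand $\|\signal_t+\noise_t\|^2-\|\signal_{c_i}+\noise_{c_i}\|^2$ directly; the two are algebraically equivalent and lead to the same leading term $-|u|+\|\bd_i\|\,\bu_i^{\prime}(\widetilde N_t-\widetilde N_{c_i})+o_P(1)$, after which both arguments invoke Assumption~\ref{ass_cp_distr} for the functional limit, Theorem~\ref{theorem_convergence_rate}(a) with $\omega_T=1$ for tightness, and an argmax/continuous-mapping step with $D\to\infty$.
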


Due to the Markov property of Wiener processes, $\{\Psi_t^{(i)}:t\ge 0\}$ is independent of $\{\Psi_t^{(i)}:t\le 0\}$.

\begin{Remark}
	\begin{enumerate}[(a)]
		\item If $\bW^{(1)}$, $(\bW^{(21)},\bW^{(22)})^{\prime}$, $\bW^{(3)}$  are independent which is typically the case (see discussion beneath Assumption~\ref{ass_cp_distr}), then 	$\Psi_t^{(i)}$ simplifies to
\begin{align*}
	\Psi_t^{(i)}:=-\abs{t}+\begin{cases}
		\sqrt{\sigma^2_{(1)}+4\,\sigma^2_{(21)}+ \sigma^2_{(3)}}\,B_t,&\ t<0\\	
		\sqrt{\sigma^2_{(1)}+4\, \sigma^2_{(22)}+\sigma^2_{(3)}}\,B_t,&\ t\ge 0,
	\end{cases}
	\end{align*}
	where $B$ is a (univariate)  standard Wiener process and  $\sigma_{(j)}^2=\mathbf{u}_i^{\prime}\boldsymbol{\Xi}^{(j)}\mathbf{u}_i$. Usually (see discussion beneath Assumption~\ref{ass_cp_distr})  $\sigma_{(21)}=\sigma_{(1)}$ and $\sigma_{(22)}=\sigma_{(3)}$ further simplifying the expression.	For some examples such as partial sum processes it holds $\bSigma_t=\bSigma$ for all $t$, such that all $\sigma_{(j)}$ coincide. In this case this further simplifies to 
\begin{align*}
	\Psi_t^{(i)}:=-\abs{t}+\sqrt{6}\, \sigma_{(1)}\,B_{t}.
\end{align*}
For univariate partial sum processes this result has already been obtained in Theorem 3.3 of \cite{EichingerKirch}. However, the assumption of $\bSigma_t=\bSigma$ is typically not fulfilled for renewal processes because the covariance depends on the changing intensity of the process.
	\item If  $\bW^{(1)}$, $(\bW^{(21)},\bW^{(22)})^{\prime}$, $\bW^{(3)}$  are independent and $\bM_t$  in \eqref{cp_estimator} is replaced by $\bSigma_t^{-1/2}\bM_t$, then the Wiener processes $\bW^{(j)}$ are standard Wiener processes, such that $\Psi_t^{(i)}$ simplifies to
		\begin{align*}
				\Psi_t^{(i)}:=-\abs{t}+\sqrt{6}\, B_{t}.
		\end{align*}
		This shows that in this case the limit distribution of $\hat{c}_{i}-c_i$ does only depend on the magnitude of the change $\bd_i$ but not on its direction $\mathbf{u}_i$.
			
				Statistically, however, this is difficult to achieve as it requires a uniformly (in $t$) consistent estimator for the usually unknown covariance matrices $\bSigma_t$.

\end{enumerate}
\end{Remark}

\section{Simulation study}\label{subsection_simulation}
%\url{https://janelia.figshare.com/articles/Eight-probe_Neuropixels_recordings_during_spontaneous_behaviors/7739750}\\
%\url{doi.org/10.25378/janelia.7739750}
In this section, we illustrate the performance of our procedure for multivariate renewal processes by means of a simulation study. Related simulations in addition to a variety of  data examples for partial sum processes have been conducted by \cite{EichingerKirch,ChoKirchMeier} and for univariate renewal processes by \cite{Messer14,Messer17}.

More precisely, we analyse three-dimensional renewal processes with $T=1600$, where the increments of the
inter-event times for each component are 
$ \Gamma- $distributed with intensity changes at 250, 500, 900 and 1150, where the expected time  $\mu$ between events is given by $1.3$, $0.9$, $0.6$, $0.8$ and $1.3$.  We use a bandwidth of $ h=120$ and the parameter $\eta=0.75$. Smaller values of $\eta$ as suggested by \cite{ChoKirchMeier} for partial sum processes tend to produce duplicate change point estimators by having two or more significant local maxima for each change point if the variance is too large. For a single-bandwidth \mosum\ procedure as suggested here, this should be avoided but can be relaxed if a post-processing procedure is applied as e.g.\ by \cite{ChoKirch} for partial sum processes.

In contrast to partial sum processes, it is natural for renewal processes that the variances change with the intensity, therefore 
we consider the following three scenarios:  (i) standard deviations of constant value 0.7 (referred to as {\texttt{constvar}}), (ii)  standard deviations being $ 5/6  \mu $ (referred to as {\texttt{smallvar}}) and (iii) multivariate Poisson processes (referred to as {\texttt{Poisson}}).

We consider both the case of independence and dependence between the three components.
In the latter case, we generate for each regime $i$ an independent (in time) sequence of $ \Gamma- $distributed inter-event-times $Y_{j}=Y_{j}^{(i)}$, $j=1,2,3$,  with a correlation of $ 0.2 $ (for all pairs) as $ Y_{j}=X_{j}+X_{4}$, where $X_j\sim\Gamma(s,\lambda) $ for $j=1,2,3$ and $X_{4}\sim\Gamma(s/4,\lambda) $ for appropriate values of $s$ and $\lambda$ (resulting in the above intensities and standard deviations for each regime).

\begin{table}[t]
	\begin{subtable}[h]{\textwidth}
		\caption{{\texttt{constvar}}: Constant standard deviation of $0.7$.}
		\begin{tabular}{c||c|c|c|c|c|c}
			Change point at& 250 & 500 & 900 & 1150& spurious & duplicate \\ \hline\hline
			independent, estimator (A)  & 0.9992  & 0.9985 & 0.9253 & 0.9998 & 0.0243 & 0.0035\\ 
			independent, estimator (B)  & 0.9962  & 0.9727 & 0.6149 & 0.9998 & 0.0033 & 0.0003\\ \hline
			dependent, estimator (A) & 0.9966  & 0.9959 & 0.9052 & 1 & 0.0326 & 0.0030\\
			dependent, estimator (B) & 0.9879  & 0.9551 & 0.6245 &  0.9993 & 0.0072 & 0.0004\\
			dependent, estimator (C)	& 0.9439 & 0.8360 & 0.3534 & 0.9975 & 0.0049 & 0.0004\\
		\end{tabular}	
		\vspace{3mm}
		\label{Table_constvar}
	\end{subtable}
	
	\begin{subtable}[h]{\textwidth}
		\caption{{\texttt{smallvar}}: Standard deviation of $ 5/6 $ the expected time between events.}
		\begin{tabular}{c||c|c|c|c|c|c}
			Change point at & 250 & 500 & 900 & 1150& spurious & duplicate\\ \hline\hline
			independent, estimator (A) & 0.9790  & 1 & 0.9707 &  1 & 0.0273 & 0.0023\\ 
			independent, estimator (B) & 0.9354  & 1 & 0.9302 &  0.9999 & 0.0038 & 0.0002\\ \hline
			dependent, estimator (A) & 0.9657  & 0.9999 & 0.9527 &  0.9996 & 0.0347 & 0.0026\\
			dependent, estimator (B) & 0.9197  & 0.9982 & 0.9089 & 0.9986 & 0.0071 & 0.0013\\
			dependent, estimator (C)	&  0.7421  & 0.9882 & 0.7137 &  0.9896 & 0.0043 & 0.0003\\
		\end{tabular}
		\vspace{3mm}
		\label{Table_smallvar}
	\end{subtable}
	\begin{subtable}[h]{\textwidth}
		\caption{ {\texttt{Poisson}}-distributed inter-event times.}
		\begin{tabular}{c||c|c|c|c|c|c}
			Change point at& 250 & 500 & 900 & 1150& spurious & duplicate\\ \hline\hline
			independent, estimator (A) & 0.8913  & 0.9963 & 0.8615 & 0.9976 &  0.0390 & 0.0042 \\ 
			independent, estimator (B) & 0.7338  & 0.9844 & 0.7174 &  0.9876 & 0.0029 & 0.0009\\ \hline
			dependent, estimator (A) & 0.8654 & 0.9910 & 0.8331 & 0.9923 & 0.0457 & 0.0047\\
			dependent, estimator (B) & 0.7138 & 0.9756 & 0.6961 & 0.9749 & 0.0056 & 0.0012\\
			dependent, estimator (C)	& 0.4525 & 0.8883 & 0.4217 & 0.8963 & 0.0042 & 0.0001\\
		\end{tabular}
		\vspace{2mm}
		\label{Table_poisson}
	\end{subtable}
	\caption{Detection rates for each change point as well as the average number of spurious and duplicate estimators for different distributions of the inter-event times.}
	\label{Table_simstudy}
\end{table}

In the simulations, we use a threshold as in Remark~\ref{rem_choice_threshold} with $\alpha_T=0.05$. By Section~\ref{sec_ex_renewal} and \eqref{eq_sigma_t} it holds that $\bSigma_t=\Covmulti{(Y_1,Y_2,Y_3)^{\prime}}/\EW{Y_1}^3$ % with  $\mathbf{D}= \linebreak\operatorname{diag}(\EW{Y_1},\EW{Y_2},\EW{Y_3})$
while we use the following choices for the matrix $\widehat{\mathbf{A}}_t$ as in \eqref{eq_mosum_signif}: (A) Diagonal matrix with locally estimated variances $\widehat{\bSigma}_t(j,j)$ on the diagonal, $j=1,2,3$,  (B) with the true variances $\bSigma_t(j,j)$ on the diagonal and (C) in case of dependent components (non-diagonal) true covariance matrix $\bSigma_t$. While only (A) is of relevance in applications, this allows us to understand the influence of estimating the variance on the procedure. For dependent data, the distinction between (B) and (C) is important for applications, because a good enough estimator (resulting in a reasonable estimator for the inverse) is often not available for the full covariance matrix as in (C) for moderately high or high dimensions, while it is much less problematic to estimate (B).
In (A) the variances at location $t$ are estimated as 

\begin{align}
	\widehat{\bSigma}_t(j,j)=\min\geschweift{\frac{ \widehat{\sigma}_{j,-}^2(t)}{\widehat{\mu}_{j,-}^3(t)},\frac{ \widehat{\sigma}_{j,+}^2(t)}{\widehat{\mu}_{j,+}^3(t)}}, \label{eq_var_est}
\end{align}
where  $\widehat{\sigma}_{j,\pm}^2(t)$ and $\widehat{\mu}_{j,\pm}(t)$ are the sample variance and sample mean respectively based on the inter-event times of the $j$th-component within the windows $(t-h,t]$ for $'-'$ respectively $(t,t+h]$ for $'+'$. The first and last inter-event times that have been censored by the window are not included. Using the minimum of the left and right local estimators takes into account that the variance can (and typically will) change with the intensity which has already been discussed by \cite{ChoKirchMeier} in the context of partial sum processes.

The results of the simulation study can be found in Table \ref{Table_simstudy}, where  we consider a change point to be detected if there was an estimator in the interval $ [c_i-h,c_i+h] $.  Additional significant local maxima in such an interval are called \emph{duplicate} change point estimators, while additional significant local maxima outside any of these intervals are called \emph{spurious}. The procedure performs well throughout all simulations with high detection rate, few spurious and very few duplicate estimators. The results improve further for smaller variance, in which case the signal-to-noise ratio is better.

When the diagonal matrix with the estimated variance is being used, the detection power is larger in all cases than when the true variance is being used. In case of the changes at location 900 this is a substantial improvement, such that the use of this local variance estimator can help boost the signal significantly. This comes at the cost of having an increased but still reasonable amount of spurious and duplicate change point estimators. 
\begin{figure}[tb]
	\includegraphics*[width=\textwidth]{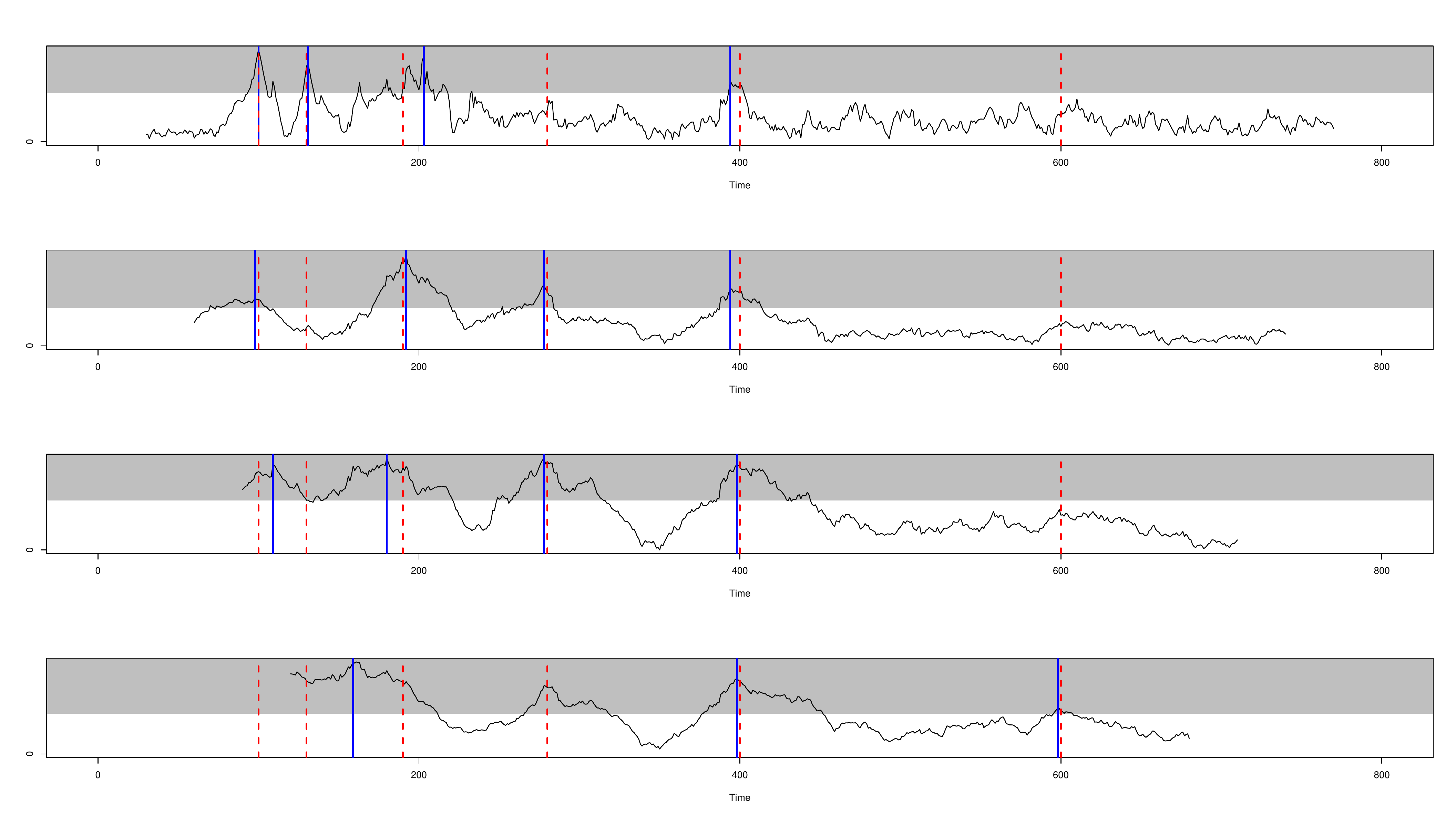}
	\caption{\mosum\ statistics with bandwidths of $ h=30,60,90,120 $ (top to bottom) for a three-dimensional renewal process with \emph{multiscale} changes% at (100,130,190,280,400,600) i.e.\ 
		with increasing distance between change points in combination with decreasing magnitude of the changes in intensity. The dashed vertical lines indicate the location of the true changes, while the solid lines indicate the change point estimators. \\In this multiscale situation no single bandwidth can detect all changes: The changes to the left are well estimated by smaller bandwidth, the ones in the middle by medium-sized bandwidths and the one to the right by the largest bandwidth. } \label{figure_mix}
\end{figure}
This effect stems from using the minimum in \eqref{eq_var_est}, which was introduced to gain detection power if the variance changes with the intensity. 
Additionally, the use of the true (asymptotic) covariance matrix leads to worse results than only using the corresponding diagonal matrix possibly because the asymptotic covariances do not reflect the small sample (with respect to $h$) covariances well enough. 
From a statistical perspective this is advantageous because the local estimation of the inverse of a covariance matrix in moderately large or large dimensions is a very hard problem leading to a loss in precision, while the diagonal elements are far less difficult to estimate consistently.

In the above situation the changes are \emph{homogeneous} in the sense that the smallest change in intensity is still large enough compared to the smallest distance to neighboring change points (for a detailed definition we refer to \cite{ChoKirch}, Definition 2.1, or \cite{cho2020data}, Definition 2.1).
In particular, this guarantees that all changes can be detected with a single bandwidth only.

In some applications with \emph{multiscale} signals, where frequent large changes as well as small isolated changes are present, this is no longer the case as Figure \ref{figure_mix} shows. In such cases, several bandwidths need to be used and the obtained candidates pruned down in a second step (see \cite{ChoKirch} for an information criterion based approach for partial sum processes as well as \cite{Messer14} for a bottom-up-approach for renewal processes). Similarly, if the distance to the neighboring change points is unbalanced \mosum\ procedures with asymmetric bandwidths as suggested by \cite{ChoKirchMeier} may be necessary.
\section{Discussion and Outlook}
In this paper, we considered a class of multivariate processes that, possibly after a change of probability space fulfill a uniform strong invariance principle. We assumed that the process switches possibly infinitely many times between finitely many regimes, with each switch inducing a change in the drift. This setup includes several important examples, including multivariate partial sum processes, diffusion processes as well as renewal processes.
In order to localise these changes, we extended the work of \cite{EichingerKirch} and \cite{Messer14} and proposed a single-bandwidth procedure using \mosum\  statistics in order to estimate changes, allowing for local changes. We were able to show consistency for the estimators. Further, we were able to derive (uniform) localisation rates in the form of exact convergence rates, which are indeed minimax-optimal.\\
One drawback of the procedure is the use of a single bandwidth. In practice, the identification of the optimal bandwidth turns out to be rather difficult as pointed out e.~g.~by \cite{ChoKirch} and \cite{Messer14}: On the one hand, one wants to choose a large bandwidth in order to have maximum power, while on the other hand, choosing a too large bandwidth may lead to misspecification or nonidentification of changes. Furthermore, as can be seen in the simulation study, in a multiscale change point situation (see Definition~2.1 of \cite{ChoKirch}) no single bandwidth can detect all change points. Therefore, one future topic of interest would be to extend the proposed procedure to a true multiscale setup as  in \cite{ChoKirch}.
\section*{Acknowledgements}
This work was supported by the Deutsche Forschungsgemeinschaft (DFG, German Research Foundation) - 314838170, GRK 2297 MathCoRe.

\printbibliography

\appendix
\section{Appendix: Proofs}\label{appendix}
We first prove some bounds for the limiting Wiener process that will be used throughout the proofs (for (i)) or are related to the bounds in Assumption~\ref{Hajek_Renyi} (for (ii) and (iii)).

\begin{Proposition}\label{prop_WP_rates}Let Assumption~\ref{invariance_principle} hold with a rate of convergence as in Assumption~\ref{bandwidth} with the notation of Assumption~\ref{Hajek_Renyi}. Let $0<\xi_T\le h_T$ and $D_T\ge 1$ be arbitrary sequences (bounded or unbounded).
\begin{enumerate}[(a)]
	\item		The following bounds hold for the Wiener processes as in Assumption~\ref{invariance_principle}:
		\begin{align*}
(i)\quad & 
\max_{i=1,\ldots,q_T}\sup_{0\le t\le \xi_T}\frac{1}{\sqrt{\xi_T}}\,\|\bW_{\theta_i}^{(\theta_{i+1})}-\bW_{\theta_i\pm t}^{(\theta_{i+1})}\|=O_P\left(\sqrt{\log 2q_T}\right),\\
(ii)\quad & 	\sup_{\frac{D_T}{\|\bd_i\|^2}\le s \le h_T} \frac{\sqrt{D_T} \left\|\bW_{\theta_i}^{(\theta_i)}-\bW_{\theta_i\pm s}^{(\theta_i)}\right\|}{s\left\|\bd_i\right\|}=O_P\left(1\right),\\
(iii)\quad &\max_{i=1,\ldots,q_T}\sup_{\frac{D_T}{\|\bd_i\|^2}\le s \le h_T} \frac{\sqrt{D_T} \left\|\bW_{\theta_i}^{(\theta_i)}-\bW_{\theta_i\pm s}^{(\theta_i)}\right\|}{s\left\|\bd_i\right\|}=O_P\left(\sqrt{\log 2q_T}\right).
		\end{align*}
	\item The  bound in (i) carries over to the centered increments of the original process:
		\begin{align*}
%(i)\quad & 
\max_{i=1,\ldots,q_T}\sup_{0\le t\le \xi_T}\frac{1}{\sqrt{\xi_T}}\,\|\bRc_{\theta_i}^{(\theta_{i+1})}-\bRc_{\theta_i\pm t}^{(\theta_{i+1})}\|=O_P\left(\sqrt{\log 2q_T}\right).%\\
%&\max_{i=1,\ldots,q_T}\sup_{0\le t\le h_T}\frac{1}{h}\,\|\bRc_{c_i}^{(c_{i})}-\bRc_{c_i-t}^{(c_{i})}\|^2=O_P({\log q_T})\\
%(ii)\quad&\sup_{\frac{D_T}{\|\bd_i\|^2}\le s \le h_T} \frac{\sqrt{D_T} \left\|\bRc_{\theta_i}^{(\theta_i)}-\bRc_{\theta_i\pm s}^{(\theta_i)}\right\|}{s\left\|\bd_i\right\|}=O_P\left(1\right)
\end{align*}
The bound in (ii) carries over if a forward and backward invariance principle as above exists starting in an arbitrary point $\theta_i$, in this case (iii) carries over if $q_T=O(1)$.
	\end{enumerate}
	For a single change point (instead of taking the maximum over all) the bound in (a) (i) and (b) is given by $O_P(1)$.
\end{Proposition}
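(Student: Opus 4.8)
The plan is to prove all three bounds in (a) by reducing them to maximal inequalities for increments of a standard Wiener process, exploiting the fact that, after rescaling by $\sqrt{\xi_T}$ (resp.\ $s$), the relevant quantities are governed by the modulus of continuity / growth of Brownian motion. First I would normalise: since Assumption~\ref{invariance_principle} gives covariance matrices $\bSigma_T^{(i)}$ whose norms and inverse norms are $O(1)$ uniformly in $T$, I can pre- and post-multiply by $(\bSigma_T^{(i)})^{-1/2}$ to reduce each $\bW^{(\cdot)}$ to a \emph{standard} $p$-dimensional Wiener process, the equivalence of norms costing only a bounded multiplicative factor that does not affect the stated rates. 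Thus it suffices to bound increments of standard Wiener processes.

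For (a)(i), the key observation is that $\sup_{0\le t\le \xi_T}\|\bW_{\theta_i}-\bW_{\theta_i\pm t}\|/\sqrt{\xi_T}$ is, by Brownian scaling, equal in distribution to $\sup_{0\le u\le 1}\|\Wcirc_u\|$ for a standard Wiener process $\Wcirc$ on $[0,1]$ (this uses that the increment over a window of length $\xi_T$ rescales to unit length). Each such supremum has Gaussian-type tails: $\Prob{\sup_{0\le u\le 1}\|\Wcirc_u\|>x}\le C\exp(-x^2/(2p))$ by a reflection/maximal-inequality argument componentwise. Taking the maximum over the $2q_T$ indices (two for the $\pm$) and applying a union bound then yields the $O_P(\sqrt{\log 2q_T})$ rate in the standard way: choosing $x=\kappa\sqrt{\log 2q_T}$ makes the union bound $2q_T\cdot C(2q_T)^{-\kappa^2/(2p)}\to 0$ for $\kappa$ large. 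For a single change point the union bound is trivial and the rate collapses to $O_P(1)$, giving the final sentence of (a).

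For (a)(ii) and (a)(iii), the suprema run over $s\in[D_T/\|\bd_i\|^2,\,h_T]$ and the rescaling factor is now $\sqrt{D_T}/(s\|\bd_i\|)$. The natural tool here is the H\'aj\'ek--R\'enyi inequality for Wiener processes: writing $s=D_T/\|\bd_i\|^2\cdot r$ with $r\ge 1$, the quantity becomes $\|\bW_{\theta_i}-\bW_{\theta_i\pm s}\|/(r\sqrt{D_T/\|\bd_i\|^2}\,\|\bd_i\|/\sqrt{D_T})$, and after Brownian scaling the process $s\mapsto \bW_{\theta_i\pm s}-\bW_{\theta_i}$ over this range has the form of a Wiener process whose supremum weighted by $1/r$ is controlled by $\sup_{r\ge 1}\|\Wcirc_r\|/r=O_P(1)$, which is exactly the continuous-time H\'aj\'ek--R\'enyi bound. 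This gives (ii) for a single $i$ with rate $O_P(1)$; the maximum over all $q_T$ indices in (iii) then incurs the same $\sqrt{\log 2q_T}$ factor via the union-bound argument as in (i), using the sub-Gaussian tail of $\sup_{r\ge 1}\|\Wcirc_r\|/r$.

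Finally, part (b) transfers the bounds from the Wiener processes to the centered original increments $\bRc$. For (i) this is immediate from the strong approximation in Assumption~\ref{invariance_principle}: by the triangle inequality the difference between the $\bRc$-increment and the $\bW$-increment over any window is controlled by $2\sup_{0\le t\le T}\|\bRc_{t}-\bW_{t}\|=O_P(T^{1/2}\nu_T)$, and after division by $\sqrt{\xi_T}$ this contributes a term of order $T^{1/2}\nu_T/\sqrt{\xi_T}$; invoking Assumption~\ref{bandwidth} (which forces $\nu_T^2 T\log T/h\to 0$, hence a fortiori $\nu_T^2 T/\xi_T\to 0$ since $\xi_T\le h_T$) shows this approximation error is negligible relative to the $\sqrt{\log 2q_T}$ bound. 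For (ii)--(iii) the same transfer works \emph{provided} forward and backward invariance principles hold from an arbitrary moving starting point $\theta_i$, which is precisely the extra hypothesis stated; the approximation error over the relevant scales is again dominated via Assumption~\ref{bandwidth}, and the maximum over $q_T$ indices forces the restriction $q_T=O(1)$ for (iii) because otherwise the accumulated approximation errors need not be uniformly negligible. The main obstacle I anticipate is the careful bookkeeping in (b): one must ensure the strong-approximation error, measured in the weighted norms of (ii)/(iii), stays negligible uniformly across the full range of scales $s$ and across all $\theta_i$ simultaneously, and it is here that the moving-starting-point invariance principle and the bandwidth condition must be combined most delicately.
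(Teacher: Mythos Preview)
Your argument for part (a) is essentially the paper's own proof: normalise by $(\bSigma_T^{(j)})^{-1/2}$, apply Brownian scaling, use reflection-principle tail bounds, and take a union bound over the $q_T$ indices. Where you invoke the continuous-time H\'aj\'ek--R\'enyi inequality abstractly for (ii)/(iii), the paper writes out the underlying dyadic decomposition of $[1,\infty)$ into blocks $[2^l,2^{l+1})$ and bounds each block by the reflection principle; these are the same argument, just at different levels of detail.

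There is, however, a genuine slip in your transfer argument for (b)(i). You write that Assumption~\ref{bandwidth} gives $\nu_T^2 T\log T/h\to 0$, ``hence a fortiori $\nu_T^2 T/\xi_T\to 0$ since $\xi_T\le h_T$''. The inequality runs the wrong way: $\xi_T\le h_T$ means $1/\xi_T\ge 1/h_T$, so $\nu_T^2 T/\xi_T\ge \nu_T^2 T/h_T$, and the bandwidth condition gives only a \emph{lower} bound on your error term, not an upper bound. If $\xi_T$ is bounded (say $\xi_T\equiv 1$) the approximation error $T^{1/2}\nu_T/\sqrt{\xi_T}$ can diverge even when the bandwidth condition holds. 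The paper's own proof of (b) in fact only treats the case $\xi_T$ of order $h_T$ (via the $\noise_t$ bound with normalisation $1/\sqrt{2h}$) and waves at the rest with ``analogously''; in the applications later in the paper $\xi_T$ is always of order $h_T$, so the issue is harmless there. But your justification as written does not establish the claim for arbitrary $\xi_T\le h_T$, and you should either restrict to $\xi_T\asymp h_T$ or give a correct argument for the general case.
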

\begin{proof}
(a) Let $\bB^{(j)}_t=(\bSigma_T^{(j)})^{-1/2}\,\bW^{(j)}_t$, then  by the self-similarity of Wiener processes it holds
\begin{align*}%\label{eq_lambda_noise}
	& \max_{i=1,\ldots,q_T}\sup_{0\le t\le \xi_T}\frac{1}{\sqrt{\xi_T}}\,\|\bW_{c_i}^{(c_{i+1})}-\bW_{c_i+t}^{(c_{i+1})}\|\\
	&	\le O(1)\, \max_{j=1,\ldots,P}\|(\bSigma^{(j)}_T)^{1/2}\|\,  \max_{i=1,\ldots,q_T}\sup_{0\le t\le 1}\|\bB_{c_i}^{(c_{i+1})}-\bB_{c_i+t}^{(c_{i+1})}\|.
\end{align*}
By the uniform boundedness of the covariance matrices as in Assumption~\ref{invariance_principle}, $$\max_{j=1,\ldots,P}\|(\bSigma_T^{(j)})^{1/2}\|=\max_{j=1,\ldots,P}\sqrt{\|\bSigma_T^{(j)}\|}=O(1).$$ 

The reflection principle in combination with tail probabilities for Gaussian random variables shows that with  appropriate constants $D_1,D_2$ (not depending on $i$) it holds for all $D\ge 1$
\begin{align*}
	P\left( \sup_{0\le t\le 1}\|\bB_{c_i}^{(c_{i+1})}-\bB_{c_i+t}^{(c_{i+1})}\|\ge D_1 \sqrt{D\,\log(2\,q_T)} \right)\le \frac{D_2}{2^D\,q_T^D},
\end{align*}
which in combination with  subadditivity shows that 
\begin{align*}
	\max_{i=1,\ldots,q_T}\sup_{0\le t\le 1}\|\bB_{c_i}^{(c_{i+1})}-\bB_{c_i+t}^{(c_{i+1})}\|=O_P(\sqrt{\log 2q_T}).
\end{align*}
The assertion without the maximum follows analogously.

Clearly, (ii) follows from (iii) so we will only prove the latter. As above it is sufficient to prove the assertion for $\{\bB_t\}$.
Due to  the self-similarity of Wiener processes and its stationary and independent increments, it holds
\begin{align*}
	\max_{i=1,\ldots,q_T}\sup_{\frac{D_T}{\|\bd_i\|^2}\le s \le h_T} \frac{\sqrt{D_T} \left\|\bB_{c_i+s}^{(c_i)}-\bB_{c_i}^{(c_i)}\right\|}{s\left\|\bd_i\right\|}\overset{\mathcal{D}}{=}
	\max_{j=1,\ldots,q_T}\sup_{1\le t \le h_T \|\bd_j\|^2/D_T} \frac{ \left\|\bB_{t}^{(j)}\right\|}{t},
\end{align*}
where $\{\bB_t^{(j)}\}$, $j=1,2,\ldots$, are independent standard Wiener processes. Similar assertions hold for the other expressions.
By the reflection principle and tail probabilities for Gaussian random variables it holds for any $C>4$ 
\begin{align*}
	& P\left( \sup_{t\ge 1}\frac{\|\bB_t\|}{t}\ge  \sqrt{C\,\log 2q_T} \right)\le \sum_{l\ge 1}P\left( \sup_{2^l\le t<2^{l+1}} \frac{\|\bB_t\|}{t}\ge \sqrt{\,C \log 2q_T} \right)\\
	&\le \,\sum_{l\ge 1} P\left( \sup_{0\le t\le 1} \|\bB_{t}\|\ge \frac{2^l}{\sqrt{2^{l+1}}}\, \sqrt{C\,\log 2q_T} \right)
	\le\,O(1)\sum_{l\ge 1} \left(O(1)  2Cq_T \right)^{-2^l}\\
	&=O\left( \frac{1}{4C^2 q_T^2} \right),
\end{align*}
which shows the assertion in combination with the sub-additivity. 

(b) By the invariance principle and (a) (i)
it holds
\begin{align*}
%	&\ \max_{i=1,\ldots,q_T}\ \sup_{c_i-h< t\le c_i+h} \sqrt{\noise_t^{\prime}\boldsymbol{\Sigma_t}^{-1}\noise_t}
%	= 
%	\max_{i=1,\ldots,q_T}\ \sup_{c_i-h< t\le c_i+h}
%	\left\|\boldsymbol{\Sigma^{(c_i)}}^{-1/2}\noise_t\right\|\\
%	&\le \maxx{j=1,\ldots,P}\left\|(\bSigma^{(j)})^{-1/2}\right\| 
	\,\max_{i=1,\ldots,q_T}\ \sup_{c_i-h_T< t\le c_i+h_T}
	\left\|\noise_t(\bRc_t)\right\|%\\&
&\le O_P\left( \frac{T^{1/2}\nu_T}{\sqrt{h_T}}  \right) +
 \max_{i=1,\ldots,q_T}\ \sup_{c_i-h_T< t\le c_i+h_T} \left\|\noise_t(\bW)\right\|\\&=O_P(\sqrt{\log 2q_T})
.
\end{align*}
The other statement can be proven analogously. For the assertion in (ii) the invariance principle starting in $\theta_i$ backward or forward applied from $ \theta_i $ to $ \theta_i\pm h_T $ yields a rate of $h_T^{1/2}\nu_{h_T}$, which is strong enough to prove the rate analogously to above.

\end{proof}

\subsection{Proofs of Section~\ref{subsection_threshold}}

\begin{proof}[Proof of Theorem \ref{theorem_threshold}]\ \\
	(a) Because $\widehat{\mathbf{A}}_t$ is symmetric and positive definite such that the minimal eigenvalue of $A_t^{-1}$ is given by $1/\|\widehat{\mathbf{A}}_t\|$ it holds
	\begin{align*}
		\signal_t\widehat{\mathbf{A}}_t^{-1}\signal_t \ge&\ \frac{1}{\|\widehat{\mathbf{A}}_t\|}\, \|\signal_t\|^2		= \frac{1}{\|\widehat{\mathbf{A}}_t\|}\,\frac{(h-\abs{c_i-t})^2}{2h} \|\bd_i\|^2.
	\end{align*}

	(b)  
		By the invariance principle from Assumption~\ref{invariance_principle} it holds by Assumption \ref{bandwidth} that
		\begin{align}\label{eq_approx_wp}
			\sup_{h\le t\le T}\left\|\noise_t-\noise_t(\mathbf{W})\right\|=O_P\left( \frac{T^{1/2}\nu_T}{\sqrt{h}} \right)=o_P\left( \sqrt{\log(T/h)}^{-1} \right),
		\end{align}
		where $\noise_{t}(\mathbf{W}_t)$ is the \mosum\  statistics defined in \eqref{eq_mosum}  with $\{\mathbf{Z}_t\}$ there replaced by $\{\mathbf{W}_t\}$. 
Assertion	 (b)(i) follows immediately by the $1/2$-self-similarity of the Wiener process with $ {\bB}_t=\bSigma_t^{-1/2}\bW_t $.\\ 
		For the sub-linear case as in (ii) we get by \eqref{eq_approx_wp}
		\begin{align*}
&	a\left(\frac{T}{h}\right)
	\supp{h\le t\le T-h}\left\|\bSigma_t^{-1/2}\noise_t\right\|
	=
	a\left(\frac{T}{h}\right)
	\supp{h\le t\le T-h}\left\|\noise_t({\bB})\right\|+o_P(1)\\
	&\overset{\mathcal{D}}{=}\frac{1}{\sqrt{2}}
\sup_{0\le s\le \frac{T}{h}-2}\left\|{\bB}_{s+2}-2{\bB}_{s+1}+{\bB}_{s}\right\|+o_P(1),
\end{align*}
where $ \left(\noise_{t}\right)_{t\ge 0} $ is a stationary process.
 Assertion (b)(ii) follows 
 by  \cite{SteinebachEastwood}, Lemma 3.1 in combination with Remark 3.1 with $\alpha=1$ and $C_1=\ldots=C_p=\frac 32$.

Replacing $ \bSigma_t $ by $ \widehat{\bSigma}_t $ does not change any of the above assertions by standard arguments.

(c) By splitting $\noise_t(\bRc)$ into increments of length at most $2h$ anchored at the change points $c_i$ we get by Proposition~\ref{prop_WP_rates}(b)(i)
\begin{align*}%\label{eq_lambda_noise}
	& \max_{i=1,\ldots,q_T}\ \sup_{c_i-h< t\le c_i+h} \left\|\noise_t(\bRc)\right\|\notag\\
	&=O(1)\,	\max_{i=1,\ldots,q_T}\sup_{0\le t\le h}\frac{1}{\sqrt{h}}\|\bRc_{c_i}^{(c_{i+1})}-\bRc_{c_i+t}^{(c_{i+1})}\|+O(1)\,	\max_{i=1,\ldots,q_T}\sup_{0\le t\le h}\frac{1}{\sqrt{h}} \|\bRc_{c_i}^{(c_{i})}-\bRc_{c_i-t}^{(c_{i})}\|
	\\&=O_P(\sqrt{\log 2q_T}).
\end{align*}
This shows that 
\begin{align*}
	\max_{i=1,\ldots,q_T}\ \sup_{c_i-h<t\le c_i+h} \noise_t^{\prime}\bSigma_t^{-1}\noise_t=&\ O_P(\log 2q_T),
\end{align*}
In combination with (b) and the fact that there are only finitely many regimes (c) follows.

\end{proof}

\subsection{Proofs of Section \ref{section_consistency}}

We first prove consistency of the segmentation procedure.
\begin{proof}[Proof of Theorem~\ref{theorem_consistency}] 
	Define for $0<\tau<1$ the following set
	\begin{align}\label{set_Sn}
	 &\ S_T= S_T^{(1)}\cap S_T^{(2)}\cap \bigcap_{j=1}^{q_T}\left(S_T^{(3)}(j,\tau)\cap S_T^{(4)}(j,\tau)\right),
	\end{align}
	where
	\begin{align*}
		&\ 
		S_T^{(1)}=	\geschweift{\max_{j=1,\ldots,q_T}\sup_{|t-c_j|> h}\bM_t^{\prime}\widehat{\mathbf{A}}_t^{-1}\bM_t <\thr},  
		 \\ &\ S_T^{(2)}=
		 \geschweift{\min_{j=1,\ldots,q_T}\bM_{c_j}^{\prime}\widehat{\mathbf{A}}_{c_j}^{-1}\bM_{c_j}\ge \thr}, \notag\\
		 &\ S_T^{(3)}(j,\tau)=	\bigcap_{k=1}^{\lceil\frac{1}{\tau}\rceil-1}\geschweift{\zentrierenalt{\supp{c_j-h\le t\le  c_j-k\tau h}\|\bM_t\|<\|\bM_{c_j-(k-1)\tau h}\|}},\\
	 &\ S_T^{(4)}(j,\tau)=	\bigcap_{k=1}^{\lceil\frac{1}{\tau}\rceil-1}\geschweift{\zentrierenalt{\supp{ c_j+k\tau h\le t\le c_j+h }\|\bM_t\|<\|\bM_{c_j+(k-1)\tau h\|}}}.
			\end{align*}
			On $S_T^{(1)}$ there are asymptotically no significant points outside of $h$-environments of the change points. On $S_T^{(2)}$ there is at least one significant time point for each change point. On $S_T^{(3)}(j,\tau)\cap S_T^{(4)}(j,\tau)$ with $\tau<\eta/2$, there are no local extrema (within the $h$-environment of $c_j$) that are outside the interval $(c_j-\tau h, c_j+\tau h)$. 
			Additionally, on $S_T^{(2)}\cap S_T^{(3)}(j,\tau)\cap S_T^{(4)}(j,\tau)$ the global extremum within that interval will be the only significant local extremum within the $h$-environment of $c_j$ such that 
			\begin{align*}
				\left\{\max_{i=1,\ldots,\min(\hat{q}_T,q_T)}\left|\hat{c}_i-\cporig_i\right|\le \tau h,	\hat{q}_T=q_T\right\}
\supset S_T.
			\end{align*}
We will conclude the proof by showing that $S_T$ is an asymptotic one set. 

Indeed,  $P(S_T^{(1)})\to 1$ by Theorem~\ref{theorem_threshold} (c) on noting that
\begin{align*}
\bM_t^{\prime}\widehat{\mathbf{A}}_t^{-1}\bM_t\le \|\widehat{\mathbf{A}}_t^{-1}\| \,\|\bM_t\|^2 
\end{align*}
and $P(S_T^{(2)})\to 1$ by Theorem~\ref{theorem_threshold} (a) and (c).

Similarly, for $ c_{i}-h \le t \le c_i $, we obtain that	
	\begin{align*}
		&\  \|\bM_{c_i-(k-1)\tau h}\|-\|\bM_{c_i-k\tau h}\|\ge \|\signal_{c_i-(k-1)\tau h}\|-\|\signal_{c_i-k\tau h}\|+O_P\left( \sqrt{\log(T/h)} \right)\\
		&\ge \frac{\tau}{\sqrt{2}} \sqrt{h}\,\|\mathbf{d}_i\| (1+o_P(1)), \end{align*} 		
where the $o_P$-term is uniform in $i$. This shows that $P\left( \bigcap_{j=1}^{q_T}S_T^{(3)}(j,\tau) \right)\to 1$. The assertion $P\left( \bigcap_{j=1}^{q_T}S_T^{(4)}(j,\tau) \right)\to 1$
follows analogously.
\end{proof}

With the above proposition we are ready to prove the localisation rates for the change point estimators.
	\begin{proof}[Proof of Theorem \ref{theorem_convergence_rate}]
		On $S_T$ as in \eqref{set_Sn} it holds for any sequence $\xi_T$
		\begin{align*}
			\left\{\hat{c}_i-c_i<- C\xi_T^2/\|\bd_i\|^2\right\}
			&= \left\{
				\sup_{c_i-h\le t\le c_i-C \xi^2_T/\left\|\bd_i\right\|^2}	\|\bM_t\|^2	\ge \sup_{c_i-C \xi^2_T/\left\|\bd_i\right\|^2\le t\le c_i+h}	\|\bM_t\|^2\right\}\\
				&\subset
\left\{
	\sup_{c_i-h\le t\le c_i-C \xi^2_T/\left\|\bd_i\right\|^2} 2h\,\left(	\|\bM_t\|^2-\|\bM_{c_i}\|^2\right)	\ge 0
\right\}.
			\end{align*}
			We will now show that the probability for the last set becomes arbitrarily small for $C$ sufficiently large with $\xi_T=\omega_T$ as well as that the probability for the union of these sets over all change points $i=1,\ldots,q_T$ becomes arbitrarily small for $\xi_T=\tilde{\omega}_T$. An analogous assertion can be shown for $\hat{c}_i>c_i+C\xi_T^2/\|\bd_i\|^2$, completing the proof. 
	For $c_i-h\le t<c_i$ the following decomposition holds
	\begin{align}
		&\mathbf{V}_t=\|	\bM_{t}\|^2-\|\bM_{c_i}\|^2= - (\signal_{c_i}-\signal_t+\noise_{c_i}-\noise_t  )^{\prime}\,(\signal_{c_i}+\signal_t+\noise_{c_i}+\noise_t  )\notag\\
		&=-\frac{1}{2h}\,\left(\signalsquareterms_{1,t}\,\bd_i+\diffusionsquareterms_{1,t}\right)^{\prime}\left(\signalsquareterms_{2,t}\,\bd_i+\diffusionsquareterms_{2,t}\right),\label{decomposition_statistic}\\%\label{decomposition_statistic}\\
		&\text{where } 
		\signalsquareterms_{1,t}=c_i-t>0,\quad \signalsquareterms_{2,t}=2h+t-c_i\ge h,\notag\\[2mm]
		&\phantom{\text{where }}
\diffusionsquareterms_{1,t}=(\bRc_{c_i-h}^{(c_i)}-\bRc_{t-h}^{(c_i)})+
(\bRc_{c_i+h}^{(c_{i+1})}-\bRc_{t+h}^{(c_{i+1})})
- 2	\left(\bRc_{c_i}^{(c_i)}-\bRc_{t}^{(c_i)}\right)\notag\\
&\phantom{\text{where }}
\diffusionsquareterms_{2,t}=(\bRc_{c_i+h}^{(c_{i+1})}-\bRc_{t+h}^{(c_{i+1})})
		+ 2
		\left(\bRc_{t+h}^{(c_{i+1})}-\bRc_{c_i}^{(c_{i+1})}\right)
	 -(\bRc_{c_i-h}^{(c_i)}-\bRc_{t-h}^{(c_i)})\notag\\*
	 &\phantom{\text{where }\diffusionsquareterms_{2,t}=}	-2
		\left(\bRc_{t}^{(c_i)}-\bRc_{c_i-h}^{(c_i)}\right).\notag
	\end{align}

	We will concentrate on the proof of (b), where the proof of (a) is done analogously without the maximum over the change points and with the (possibly) tighter rate $\omega_T$ as in Assumption~\ref{Hajek_Renyi} (a) instead of $\tilde{\omega}_T$ as in (b). Indeed, Assumption~\ref{Hajek_Renyi} (b) immediately implies that for any $\epsilon>0$ there exists a $C$ such that for any $y>0$ it holds
	\begin{align*}
		%\label{eq_term1_local}
		&P\left( \max_{i=1,\ldots,q_T}\sup_{c_i-h\le t\le c_i-C \tilde{\omega}_T^2/\|\bd_i\|^2}\frac{\|\diffusionsquareterms_{1,t}\|}{\signalsquareterms_{1,t}\,\|\bd_i\|}\ge y \right)
		\\&=
		P\left( \max_{i=1,\ldots,q_T}\sup_{C \tilde{\omega}_T^2/\|\bd_i\|^2\le c_i-t\le h }\sqrt{C\tilde{\omega}_T^2}\,\frac{ \|\diffusionsquareterms_{1,t}\|}{\|\bd_i\||c_i-t|}\ge \sqrt{C}\, y\, \tilde{\omega}_T \right)
		\le \epsilon.\notag
	\end{align*}
	Similarly, by Proposition~\ref{prop_WP_rates} (b)(i), it holds
	\begin{align*}
		&	 \max_{i=1,\ldots,q_T}\sup_{c_i-h\le t\le c_i-C \tilde{\omega}_T^2/\|\bd_i\|^2}	\frac{\|\diffusionsquareterms_{2,t}\|}{\signalsquareterms_{2,t}\,\|\bd_i\|}=O_P\left( \sqrt{\frac{\log 2q_T}{h \|\bd_i\|^2}} \right)=o_P(1),
	\end{align*}
	where the last statement follows by Assumption~\ref{bandwidth} on noting that $q_T\le T/(2h)$.

	Combining the above assertions with $P(S_T^c)=o_P(1)$  we obtain using the Cauchy-Schwarz inequality
	\begin{align*}
		&P\left(\|\bd_i\|^2(\hat{c}_i-c_i)<- C\tilde{\omega}_T^2 \text{ for some }i=1,\ldots,q_T \right)\\
		&\le o_P(1)+ P\Biggl(\max_{i=1,\ldots,q_T}\supp{c_i-h\le t\le c_i-\frac{C\tilde{\omega}_T^2}{\left\|\bd_i\right\|^2}}-D_{1,t}D_{2,t}\|\bd_i\|^2\,\\*
&\phantom{{\max_{i=1,\ldots,q_T}\supp{c_i-h\le t\le c_i-\frac{C\tilde{\omega}_T^2}{\left\|\bd_i\right\|^2}}-D_{1,t}
}}
		\cdot \left(1+\frac{\diffusionsquareterms_{1,t}^{\prime}\bd_i}{\signalsquareterms_{1,t}\|\bd_i\|^2}+\frac{\bd_i^{\prime}\diffusionsquareterms_{2,t}}{\signalsquareterms_{2,t}\|\bd_i\|^2}+\frac{\diffusionsquareterms_{1,t}^{\prime}\diffusionsquareterms_{2,t}}{\signalsquareterms_{1,t}\signalsquareterms_{2,t}\,\|\bd_i\|^2}\right)\ge 0 \Biggr)\\
	& \le o_P(1)+ \Prob{\max_{i=1,\ldots,q_T}\supp{c_i-h\le t\le c_i-\frac{C\tilde{\omega}_T^2}{\left\|\bd_i\right\|^2}}\left|\frac{\diffusionsquareterms_{1,t}^{\prime}\bd_i}{\signalsquareterms_{1,t}\|\bd_i\|^2}+\frac{\bd_i^{\prime}\diffusionsquareterms_{2,t}}{\signalsquareterms_{2,t}\|\bd_i\|^2}+\frac{\diffusionsquareterms_{1,t}^{\prime}\diffusionsquareterms_{2,t}}{\signalsquareterms_{1,t}\signalsquareterms_{2,t}\,\|\bd_i\|^2}\right|\ge 1}\\
	&\le o_P(1) + P\left(\max_{i=1,\ldots,q_T}\supp{c_i-h\le t\le c_i-\frac{C\tilde{\omega}_T^2}{\left\|\bd_i\right\|^2}} \,\frac{\|\diffusionsquareterms_{1,t}\|}{\signalsquareterms_{1,t}\|\bd_i\|}\ge \frac 1 3  \right)\le \epsilon
	\end{align*}
	for $C$ large enough (and $\epsilon$ arbitrary). This concludes the proof.
	\end{proof}
		
\begin{proof}[\textbf{Proof of Theorem \ref{limit_distribution}}]\ 
	For $ 0\le c_i-t\le D/\|\bd_i\|^2$ it holds by Proposition~\ref{prop_WP_rates}	(b) (i) (the result without the maximum over all change points) with the notation as in \eqref{decomposition_statistic}
	\begin{align*}
		&\max_{0\le c_i-t\le D/\|\bd_i\|^2}\|\diffusionsquareterms_{1,t} \| =O_P\left( \frac{1}{\|\bd_i\|} \right),\qquad \max_{0\le c_i-t\le D/\|\bd_i\|^2}\|\diffusionsquareterms_{2,t} \| =O_P\left( \sqrt{h} \right),\\
		&\max_{0\le c_i-t\le D/\|\bd_i\|^2}|\signalsquareterms_{1,t} | =O\left( \frac{1}{\|\bd_i\|^2} \right), \quad \max_{0\le  c_i-t\le D/\|\bd_i\|^2}|\signalsquareterms_{2,t}-2h | =O\left( \frac{1}{\|\bd_i\|^2} \right).
	\end{align*}
	Together with \eqref{decomposition_statistic} this shows
	\begin{align*}
		&		\mathbf{V}_t=-\|\bd_i\|^2 |c_i-t|- \|\bd_i\|\, \diffusionsquareterms_{1,t}^{\prime}\mathbf{u}_i+O_P\left( \frac{1}{\sqrt{h} \,\|\bd_i\|} \right)	\end{align*}
	By Assumption~\ref{bandwidth} it holds $\|\bd_i\|^2h\to \infty$ such that with the substitution \mbox{$s=(t-c_i)\|\bd_i\|^2$} with $-D\le s\le 0$ we get
	\begin{align*}
		&\mathbf{V}_s=-|s|+\|\bd_i\|\,\left( \bY_{D+s}^{(1)}-\bY_D^{(1)} \right)^{\prime}\bu_i-2 \|\bd_i\|\,\left(\bY_{D+s}^{(21)}-\bY_D^{(21)} \right)^{\prime}\bu_i\\*
		&\phantom{\mathbf{V}_s=-|s|}
			+\|\bd_i\|\,\left( \bY_{D+s}^{(3)}-\bY_D^{(3)} \right)^{\prime}\bu_i+o_P(1).
	\end{align*}
	Similarly, for $0\le t-c_i\le D/\|\bd_i\|^2$ and the same substitution now leading to $0\le s\le D$ we get
	\begin{align*}
		&\mathbf{V}_s=-|s|+\|\bd_i\|\,\left( \bY_{D+s}^{(1)}-\bY_D^{(1)} \right)^{\prime}\bu_i-2 \|\bd_i\|\,\left(\bY_{D+s}^{(22)}-\bY_D^{(22)} \right)^{\prime}\bu_i\\*
		&\phantom{\mathbf{V}_s=-|s|}
			+\|\bd_i\|\,\left( \bY_{D+s}^{(3)}-\bY_D^{(3)} \right)^{\prime}\bu_i+o_P(1).
	\end{align*}
	Note that for $\|\bd_i\|^2\,|\hat{c}_i-c_i|\le D$ it holds
	\begin{align*}
		\|\bd_i\|^2 (\hat{c}_i-c_i)\le x \quad\iff\quad \max_{-D\le s\le x}\mathbf{V}_s\ge \max_{x<s\le D}\mathbf{V}_s.
	\end{align*}
	Now, first applying the functional central limit theorem from Assumption~\ref{ass_cp_distr} and then letting $D\to \infty$ (in combination with Theorem~\ref{theorem_convergence_rate}, where now by assumption $\omega_T=1$) yields the result.
\end{proof}

		\end{document}